\newtheorem{theorem}{Theorem}[section]
\newtheorem{proposition}[theorem]{Proposition}
\newtheorem{lemma}[theorem]{Lemma}
\newtheorem{corollary}[theorem]{Corollary}
\theoremstyle{definition}
\newtheorem{definition}[theorem]{Definition}
\newtheorem{example}[theorem]{Example}
\newtheorem{remark}[theorem]{Remark}
\newcommand{\R}{\mathbb{R}}
\newcommand{\E}{\mathbb{E}}
\newcommand{\Tc}{\mathcal{T}}
\newcommand{\supp}{\mathrm{supp}}
\newcommand{\spn}{\mathrm{span}}
\newcommand{\inc}{\mathrm{inc}}
\newcommand{\rank}{\text{rank}}
\newcommand{\sign}{{\rm{sign}}}
\newcommand{\tr}{{\rm{Tr}}}
\newcommand{\vecc}{{\rm{vec}}}
\newcommand{\ran}{{\rm{Ran}}}
\newcommand{\mi}{{\ell_{\infty}}} 
\DeclareMathOperator*{\argmin}{argmin}
\title{The Masked Matrix Separation Problem: A First Analysis}
\author[1]{Xuemei Chen}
\author[2]{Rongrong Wang}
\affil[1]{Department of Mathematics and Statistics, University of North Carolina Wilmington}
\affil[2]{CMSE and the Department of Mathematics, Michigan State University}
\begin{document}

\date{\today}

\maketitle

\begin{abstract}
Given a known matrix that is the sum of a low rank matrix and a masked sparse matrix, we wish to recover both the low rank component and the sparse component. The sparse matrix is masked in the sense that a linear transformation has been applied on its left. We propose a convex optimization problem to recover the low rank and sparse matrices, which generalizes the robust PCA framework. We provide incoherence conditions for the success of the proposed convex optimizaiton problem, adapting to the masked setting. The ``mask'' matrix can be quite general as long as a so-called restricted infinity norm condition is satisfied. Further analysis on the incoherence condition is provided and we conclude with promising numerical experiments.
\end{abstract}


\section{Introduction}

The problem of separating a known matrix into its low rank and sparse components has found many applications in recommender systems, multimedia processing, statistical model selection, etc.~\cite{C11, CLMW11}. In this work we are interested in a more general setting, what we call the masked matrix separation problem. Specifically, we would like to recover a low rank, or approximately low rank matrix $L_0\in\R^{m\times n}$, and a sparse or approximately sparse matrix $S_0\in\R^{p\times n}$ from 
\begin{equation}\label{equ:r}
M_0 = L_0+HS_0.
\end{equation} 
In the model \eqref{equ:r}, the matrix $H\in\R^{m\times p}$ is known, and is  referred as the \emph{mask}.

The $\ell_1$ norm and the nuclear norm (sum of singular values) $\|\cdot\|_*$ are  natural convex relaxations of the sparsity of $S$ and low rank of $L$ respectively, so we propose the following convex optimization program with a suitable $\gamma$.
\begin{equation}\label{equ:p}
(\hat S, \hat L)=\argmin_{S, L}\gamma\|S\|_1+\|L\|_*, \quad\text{subject to }L+HS=M_0.
\end{equation}


If $H$ is the identity matrix, this reduces to the classical matrix separation problem \cite{C11, CLMW11}.
If $H$ is invertible, then the constraint \eqref{equ:r} is equivalent to $H^{-1}M_0=H^{-1}L_0+S_0$ where $H^{-1}L_0$ is also low rank, so we can still resort to the classical matrix separation problem:
\begin{equation}\label{equ:pinv}
\left\{
\begin{array}{ll}
(\hat S_1, \hat Y)=\argmin_{S, Y}\gamma\|S\|_1+\|Y\|_*, \quad\text{subject to }S+Y=H^{-1}M_0\\
\hat L_1=H\hat Y.
\end{array}\right.
\end{equation}
It is worth noting that $(\hat S, \hat L)$ from \eqref{equ:p} is not necessarily the same as $(\hat S_1, \hat L_1) $ from \eqref{equ:pinv}, essentially due to $\|HY\|_*\neq\|Y\|_*$, so investigating the conditions of recovering $(S_0, L_0)$ via \eqref{equ:p}  is of independent interest even when $H$ is invertible.

\subsection{Applications}
The setup in \eqref{equ:r} is partially motivated by the electrodermal activity (EDA) signal decomposition problem. EDA assesses the naturally occurring changes in electrical properties of human skin. The EDA signal includes both
background tonic (skin conductance level: SCL) and rapid phasic components (Skin Conductance Responses: SCRs) that result from sympathetic neuronal activity. EDA is arguably the most useful index of changes in sympathetic arousal that are tractable to emotional and cognitive states as it is the only autonomic psychophysiological variable that is
not contaminated by parasympathetic activity~\cite{edacs, edaguide}. Specifically, the observed EDA signal $y$ has the form
\begin{equation}\label{equ:eda}
y=t+r+e,
\end{equation}
where $t$ is SCL, the tonic component, $r$ is SCR, the phasic component, and $e$ is a noise component. The signal $r$ can be further modeled as a convolution of some filter $h$ and the SCR events $x$, that is,
\begin{equation}
r=h*x=Hx.
\end{equation}
In most prior literature, the SCR event $x$ is modeled to be sparse and have nonnegative coordinates~\cite{chaspari2014sparse}.

The goal of EDA decomposition is to recover $x$ from $y$.
Let $y_1, y_2, \cdots, y_n\in\R^m$ be $m$ observed EDA signals. Let $y_i=t_i+Hx_i+e_i$ following \eqref{equ:eda}, then we have
$$Y=T+HX+E,$$
where $Y=[y_1,\cdots,y_n], T=[t_1, \cdots, t_n], X=[x_1,\cdots,x_n], E=[e_1,\cdots,e_n]$.

$T$ will be close to a low rank matrix (rank 1) and $X$ is a sparse matrix. More details on $H$ in this setting can be found in Section \ref{sec:num_eda}.


We expect our general setting go beyond this motivating example. For instance, we are given a blurred surveillance video $M = HS_0+HL_0$ where $H$ is the blurring matrix, $S_0$ represents the moving objects (sparse), and $L_0$ stores the constant background. The  recovery of $S_0$ means removing background  and deblurring moving objects simultaneously.

\subsection{Related Work and Contributions}
The seminal work \cite{C11, CLMW11} largely laid theoretical foundation for separability via \eqref{equ:p} when $H$ is the identity matrix.  This is often referred as the robust principal component analysis (PCA) problem. In \cite{C11}, Chandrasekaran et. al. provided a deterministic result on sufficient conditions of $L_0$ and $S_0$ for separability. The results in \cite{CLMW11} are not deterministic. In particular, it assume a random sparse pattern for $S_0$. It however has a better theoretical guarantee than that in \cite{C11}. The work \cite{HKZ11} improves the results in \cite{C11} by allowing weaker sufficient conditions, and it also studies the effect of perturbations on the accuracy of the recovered components.

As far as the authors can tell, a matrix separation problem in the general form of \eqref{equ:r} has not been studied before. The work \cite{YGL15} is related in that the mask matrix is being multiplied to the low rank matrix. We present a first theoretical guarantee (Theorem \ref{thm:main}) for the successful recovery of the low rank and sparse components through the convex programming \eqref{equ:r}. The result is deterministic. Our analysis follows the work \cite{C11} closely. However, the involvement of $H$ poses nontrivial challenges as we will explain. We propose a reasonable and minimal condition on $H$, which requires it  to satisfy a restricted infinity norm property (Definition \ref{def:rinp}). 
Several numerical experiments are conducted in Section \ref{sec:num} with various $H$.


\section{Main Results}
\subsection{Notations and Preliminaries}

For a vector $x$, we let $\|x\|_p=(\sum_{i}|x_i|^p)^{1/p}$ for $p\geq1$ and $\|x\|_\infty=\max_i|x_i|$. We also define $\sign(x)$ coordinate-wise as
$$\sign(x)_i = \left\{\begin{array}{ll}
1, &\text{ if }x_i >0\\
-1, &\text{ if }x_i<0\\
0, &\text{ if }x_i=0\end{array}\right..$$
$\sign()$ of a matrix is defined similarly.

For a matrix $A=(a_{ij})$, $\|A\|_\infty=\max_{i,j}|a_{ij}|$, and $\|A\|_1=\sum_{i,j}|a_{ij}|$ are its vectorized infinity norm and 1-norm respectively.

 The nuclear norm $\|A\|_*$ is the sum of its singular values, also called the Schattern 1-norm. Moreover, we let $\|A\|:=\sup_{\|x\|_2=1}\|Ax\|_2$ be its spectral norm, and
$$\|A\|_\mi:=\sup_{\|x\|_\infty=1}\|Ax\|_\infty$$
be its ``spectral infinity norm".
It is known that $\|A\|_\mi=\max_{i}\sum_j|a_{ij}|$, the maximum absolute row sum. Given this notation, we then have
\begin{equation}\label{equ:AM}
\|AM\|_\infty\leq\|A\|_\mi\|M\|_\infty, \text{ for all }M.
\end{equation} 

For two real matrices $A, B$ of the same dimension, define their inner product to be $\langle A, B\rangle=\tr(A^TB)$. Using dual norm properties, we have
$$|\langle A, B\rangle|\leq\|A\|_*\|B\|, \quad |\langle A, B\rangle|\leq\|A\|_1\|B\|_\infty.$$

Moreover, we use $A[i, :]$  and $A[:, j]$ to indicate the $i$th row and $j$th column of $A$ respectively.

For any subspace $V$, $V^\perp$ is its orthogonal complement and $P_V$ is the orthogonal projection onto $V$.
We use $\ran(A)$ to denote the range of $A$, i.e., the column space of $A$. 

The set
$$\Tc(A):=\{X+Y: \ran(X)\subset\ran(A), \ran(Y^T)\subset\ran(A^T)\}$$
is the tangent space at $A$ with respect to the variety of all matrices with rank less than or equal to $\rank(A)$~\cite{C11}. It can be computed that the projection onto this tangent space is
\begin{equation}\label{equ:pt}
P_{\Tc(A)}(X) = UU^TX + XVV^T-UU^TXVV^T,
\end{equation}
where $A=U\Sigma V^T$ is its reduced singular value decomposition (SVD), that is, both $U$ and $V$ have $\rank(A)$ columns.
Given  \eqref{equ:pt}, we can compute
$$P_{\Tc(A)^\perp}(X) = (I-UU^T)X(I-VV^T),$$
and therefore 
$$\|P_{\Tc(A)^\perp}(X)\|\leq\|X\|.$$
Consequently,  $\|P_{\Tc(A)}(X)\|\leq \|X-P_{\Tc(A)^\perp}(X)\|\leq 2\|X\|.$ 

For any matrix $A$, define
$$\Omega(A) :=  \text{the set of matrices whose support is contained within the support of }A.$$

For a matrix $A\in\R^{m\times n}$, let $d_r(A), d_c(A)$ be the maximum number of nonzero entries per row, per column respectively.  Specifically,
$$d_r(A)=\max_{i\in\{1, 2, \cdots, m\}}\{|\{j: A_{ij}\neq0\}|\}$$
where $|\cdot|$ is the cardinality of a set, and $d_c(A)$ is defined likewise.
Then $$d(A):=\max\{d_r(A), d_c(A)\}$$ is the maximum number of nonzero entries per row or column of $A$.

With $H\in\R^{m\times p}$, we let $H^\dagger\in \R^{p\times m}$ be its pseudoinverse (Moore-Penrose inverse). This means $HH^\dagger$ is the orthogonal projection onto the column space of $H$ and $H^\dagger H$ is the orthogonal projection onto the row space of $H$. We have
$$HH^\dagger H = H, \text{ or } H^\dagger HH^\dagger = H^\dagger.$$
Let $H=U_H\Sigma_HV_H^T$ be its reduced SVD so that $U_H\in\R^{m\times K}, \Sigma_H\in\R^{K\times K}, V_H\in\R^{p\times K}$, where $K=\rank(H)$. The pseudoinverse can be expressed as $H^\dagger=V_H\Sigma_H^{-1}U_H^T$.

For an integer $n$, we let $[n]$ to denote the index set $\{1, 2, \cdots, n\}$.

\subsection{Separable Conditions}
In order to establish conditions for successful recovery via \eqref{equ:p}, we start by reviewing the case when $H$ is the identity matrix. When recovering $S_0$ and $L_0$ from $M_0=S_0 + L_0$, the sparse matrix $S_0$ itself should not be low rank in order to be distinguished from $L_0$. In \cite{C11}, Chandrasekaran et al. defines the quantity
\begin{equation}
\mu(S):=\max_{A\in\Omega(S), \|A\|_\infty\leq1}\|A\|
\end{equation}
for this task. The idea is that with an infinity norm restriction, a matrix tends to reach its maximal spectral norm while being low rank. For example, with restriction $\|A\|_\infty\leq1$ only, $\|A\|$ is biggest when $A$ is the all-ones matrix.  With the additional restriction $A\in\Omega(S)$, placing an appropriate upper bound on $\mu(S)$ forces matrices in $\Omega(S)$ to have reasonable rank.
For our masked matrix separation problem, we ask, more appropriately, that $HS_0$ not to be low rank, which motivates the following quantity
\begin{equation}\label{equ:muH}
\mu_H(S):=\max_{A\in\Omega(S), \|A\|_\infty\leq1}\|HA\|.
\end{equation}

For the regular matrix separation problem, the work \cite{C11} also defines
\begin{equation}\label{equ:xi}
\xi(L):=\max_{A\in \Tc(L), \|A\|\leq1}\|A\|_\infty.
\end{equation}
Small $\xi(L)$ implies $L$ is not too sparse, and thus avoid having $L$ to be low rank and sparse simultaneously.
To adapt to our masked setting, consider the quantity $\max_{HA\in \Tc(L), \|HA\|\leq1}\|A\|_\infty$, which at first glance appears to be a reasonable modification to \eqref{equ:xi} as we do not want a sparse $A$ such that $HA$ is in $\Tc(L)$. 
However, this is problematic when $H$ has a non-trivial null space, in which case $\sup_{HA\in \Tc(L), \|HA\|\leq1}\|A\|_\infty=\infty$ by choosing nontrivial $A$ such that $HA=0$.
For this reason, we further modify $\max_{HA\in \Tc(L), \|HA\|\leq1}\|A\|_\infty$  to be 
$\max_{HA\in \Tc(L), \|HA\|\leq1}\|H^T HA\|_\infty,$ which is why we define
\begin{equation}\label{equ:xiH}
\xi_H(L):=\max_{B\in \Tc(L), \|B\|\leq1}\|H^T B\|_\infty.
\end{equation}


\begin{definition}\label{def:rinp}
Given a matrix $S_0\in \R^{p\times n}$ and $0<\delta <1$, we say a matrix $G$ of dimension $m\times p$ has the $S$-$\delta$-\emph{restricted infinity norm property} ($S_0$-$\delta$-RINP) if 
\begin{equation}\label{equ:Hrip}
\|(I-G^TG)A\|_\infty\leq\delta\|A\|_\infty \text{ for all }A\in\Omega(S_0).
\end{equation}
\end{definition}

Ultimately, we would like our mask $H$ to satisfy \eqref{equ:Hrip} up to a \textbf{column scaling}.
\begin{definition}\label{def:rinps}
Given a matrix $S_0\in \R^{p\times n}$ and $0<\delta <1$, we say a matrix $H$ of dimension $m\times p$ has the \emph{scaled-$S$-$\delta$-restricted infinity norm property} (scaled-$S_0$-$\delta$-RINP) if there exists $G=HD$, where $D$ is an invertible diagonal matrix such that \eqref{equ:Hrip} holds.
\end{definition}
To reiterate, $H$ satisfies  the scaled-$S_0$-$\delta$-RINP if 
\begin{equation}\label{equ:Grip}
\begin{array}{l}
\exists\ G=HD\text{ where }D\text{ is diagonal and invertible such that}\\
 \|(I-G^TG)A\|_\infty\leq\delta\|A\|_\infty \text{ for all }A\in\Omega(S_0).
\end{array}
\end{equation}

Given $S_0$ is sparse,  \eqref{equ:Hrip} is quite reminiscent of the restricted isometry property~\cite{C08} used in the compressed sensing literature, but with an infinity norm.

We see that \eqref{equ:Grip} with $\delta<1$ implies $G$ is invertible on the set $\Omega(S_0)$ (this is equivalent to $H$ being invertible on $\Omega(S_0)$). This is because otherwise we can find an $A$ such that $\eqref{equ:Hrip}$ is violated. 
More analysis can be found in Section \ref{sec:rinp}.

It is easy to see that we need $H\Omega(S_0)\cap \Tc(L_0)=\{0\}$ for unique identifiability of $S_0, L_0$ from \eqref{equ:r}. The following proposition provides a sufficient condition for this. 

\begin{proposition}\label{pro:first}
Given any $L_0\in\R^{m\times n}, S_0\in\R^{p\times n}$, and let  $H$ satisfy the scaled $S_0$-$\delta$-RINP \eqref{equ:Grip}.
If $\mu_G(S_0)\xi_G(L_0)<1-\delta$, then $H\Omega(S_0)\cap \Tc(L_0)=\{0\}$.
\end{proposition}
\begin{proof}
Since $H\Omega(S_0)=G\Omega(S_0)$, we need to prove $G\Omega(S_0)\cap \Tc(L)=\{0\}$.
By way of contradiction suppose we have nonzero $GA_0\in G\Omega(S_0)\cap \Tc(L_0)$ where $A_0\in\Omega(S_0)$, then 
\begin{equation}\label{equ:p2}
\mu_G(S_0)\geq\Big\|\frac{GA_0}{\|A_0\|_\infty}\Big\|=\frac{\|GA_0\|}{\|A_0\|_\infty},
\end{equation} and similarly
\begin{equation}\label{equ:p3}
\xi_G(L_0)\geq\Big\|\frac{G^T GA_0}{\|GA_0\|}\Big\|_\infty=\frac{\|G^TGA_0\|_\infty}{\|GA_0\|}{\geq}\frac{(1-\delta)\|A_0\|_\infty}{\|GA_0\|}.
\end{equation}
We get a contradiction when multiplying inequalities \eqref{equ:p2} and \eqref{equ:p3}, so we must have $\mu_G(S_0)\xi_G(L_0)<1-\delta$.
\end{proof}


Following standard notation in convex analysis~\cite{R70}, for a continuous function $f:\R^N\rightarrow\R$, its \emph{subdifferential} at $x\in \R^N$ is defined as
$$\partial f(x)=\{x^*: f(y)\geq f(x)+\langle x^*, y-x\rangle \text{ for any }y\in\R^N\}.$$
We can characterize the subdifferential of the $\ell_1$ norm. We have $y\in\partial \|x\|_1$ if and only if 
\begin{equation}\label{equ:subl1}
y_{\supp(x)}=\sign(x) \text{ and }\|y_{(\supp(x))^c}\|_\infty\leq1.
\end{equation}
The subdifferential of the nuclear norm has a form similar to \eqref{equ:subl1}~\cite{W92}, and we have $A\in\partial\| B\|_*$ if and only if
\begin{equation}\label{equ:subnuc}
P_{\Tc(B)}(A)=U_BV_B^T, \text{ where }B=U_B\Sigma V_B^T\text{ is its reduced SVD, and} \|P_{\Tc(B)^\perp}(A)\|\leq1.
\end{equation}

Using basic convex optimization, we know $(S_0, L_0)$ is a minimizer of \eqref{equ:p} if and only if
\begin{equation}\label{equ:dual}
\text{ there exists a matrix }Q\text{ such that }H^TQ\in\gamma\partial\| S_0\|_1\text{ and }Q\in\partial\| L_0\|_*.
\end{equation} 
Using \eqref{equ:subl1} and \eqref{equ:subnuc},  the optimality condition \eqref{equ:dual} becomes
\begin{equation}\label{equ:opt}
\text{There exists }Q \text{ such that }\left\{
\begin{array}{ll}
P_{\Omega(S_0)}(H^TQ)=\gamma\sign(S_0), \|P_{\Omega(S_0)^\perp}(H^TQ)\|_\infty\leq\gamma,\\
P_{\Tc(L_0)}(Q)=UV^T, \|P_{\Tc(L_0)^\perp}(Q)\|\leq1,
\end{array}\right.
\end{equation}
given the SVD of $L_0$ is $L_0 = U\Sigma V^T$.

The following proposition (where we use the notation $G$ instead of $H$) can be viewed as a generalization of  \cite[Proposition 2]{C11} or  \cite[Lemma 2.4]{CLMW11}.


\begin{proposition}\label{pro:dual}
Let $L_0=U\Sigma V^T$ be its SVD and $M_0=GS_0+L_0$. If the following conditions (1) and (2) hold, then for any optimizer $(\hat S, \hat L)$ of 
$$(\hat S, \hat L)=\argmin_{S, L}\gamma\|S\|_1+\|L\|_*, \quad\text{subject to }L+GS=M_0,
$$ it must hold that $G\hat S=GS_0$ and $\hat L=L_0$.
\begin{itemize}
\item[(1)] $G\Omega(S_0)\cap \Tc(L_0)=\{0\}$.
\item[(2)] There exists $Q$ such that
\begin{enumerate}
\item[(a)] $P_{\Tc(L_0)}(Q)=UV^T$, where $U\Sigma V^T$ is the singular value decomposition of $L_0$.
\item[(b)] $\|P_{(\Tc(L_0))^\perp}(Q)\|<1$
\item[(c)] $P_{\Omega(S_0)}(G^TQ)=\gamma\sign(S_0)$
\item[(d)] $\|P_{(\Omega(S_0))^\perp}(G^TQ)\|_\infty<\gamma$.
\end{enumerate}
\end{itemize}
\end{proposition}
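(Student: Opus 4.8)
The plan is to establish this via the standard certificate/duality argument used in robust PCA, adapted to account for the mask $G$. Let $(\hat S, \hat L)$ be any optimizer, and write $\hat L = L_0 + \Delta_L$, $\hat S = S_0 + \Delta_S$. Since the feasibility constraint gives $\hat L + G\hat S = M_0 = L_0 + GS_0$, we get $\Delta_L = -G\Delta_S$, so the whole perturbation is controlled by $\Delta := \Delta_L = -G\Delta_S \in G\Omega(S_0)^{\text{ambient}}$ — more precisely $\Delta$ lies in the range of $G$. The goal is to show $\Delta = 0$ (this gives $\hat L = L_0$ and $G\hat S = GS_0$ directly; note we cannot conclude $\hat S = S_0$ since $G$ may have a nontrivial kernel).

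First I would use optimality of $(\hat S, \hat L)$ together with convexity: for \emph{any} subgradient pair, $\gamma\|\hat S\|_1 + \|\hat L\|_* \geq \gamma\|S_0\|_1 + \|L_0\|_*$ forces a lower bound coming from the subdifferential inequalities. The trick is to pick a \emph{clever} subgradient. For the $\ell_1$ term, choose $\gamma(\sign(S_0) + F)$ where $F$ is supported on $\Omega(S_0)^\perp$ and matches $-\sign(P_{\Omega(S_0)^\perp}(\text{something}))$ to extract the full $\|P_{\Omega(S_0)^\perp}(G^T\!\cdot)\|$ mass; similarly for the nuclear norm choose $UV^T + W$ with $\|W\| \le 1$ and $\langle W, P_{\Tc(L_0)^\perp}(\Delta_L)\rangle = \|P_{\Tc(L_0)^\perp}(\Delta_L)\|_*$. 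Combining the two subgradient inequalities and using the certificate $Q$ from condition (2) to cancel the ``aligned'' parts ($\langle UV^T, \Delta_L\rangle$ and $\gamma\langle \sign(S_0), \Delta_S\rangle$) via $\langle Q, \Delta_L\rangle = -\langle Q, G\Delta_S\rangle = -\langle G^TQ, \Delta_S\rangle$, the strict inequalities (b) and (d) yield something like
\begin{equation*}
0 \geq \gamma\|\hat S\|_1 + \|\hat L\|_* - \gamma\|S_0\|_1 - \|L_0\|_* \geq c_1\|P_{\Omega(S_0)^\perp}(\Delta_S)\|_1 + c_2 \|P_{\Tc(L_0)^\perp}(\Delta_L)\|_*
\end{equation*}
for strictly positive constants $c_1, c_2$ depending on the gaps in (b), (d). Hence $P_{\Omega(S_0)^\perp}(\Delta_S)$ and $P_{\Tc(L_0)^\perp}(\Delta_L)$ vanish: i.e. $\Delta_S \in \Omega(S_0)$ and $\Delta_L \in \Tc(L_0)$. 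But then $\Delta_L = -G\Delta_S \in G\Omega(S_0) \cap \Tc(L_0) = \{0\}$ by condition (1), so $\Delta_L = 0$ and $G\Delta_S = 0$, finishing the proof.

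The step I expect to be the main obstacle is handling the mask $G$ cleanly when pushing the $\ell_1$ subgradient through: the dual object that naturally pairs with $\Delta_S$ is $G^TQ$, not $Q$, and the certificate conditions (c)/(d) are stated for $G^TQ$ precisely to make this work — so I must be careful that the ``extra'' subgradient piece $F$ on $\Omega(S_0)^\perp$ can indeed be chosen to witness $\|P_{\Omega(S_0)^\perp}(G^T(Q + \text{correction}))\|_\infty$ rather than something involving $Q$ directly, and that no cross-terms involving $\ker G$ spoil the cancellation. A secondary subtlety is the bookkeeping to make the constants $c_1, c_2$ genuinely positive simultaneously — this is where strictness in (b) and (d) (as opposed to the non-strict bounds in the raw optimality condition \eqref{equ:opt}) is essential, and one typically introduces a single parameter interpolating the two subgradient choices, or invokes a small-ball/compactness argument, exactly as in \cite[Proposition 2]{C11}. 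I would model the write-up closely on that proof, substituting $G^TQ$ for $Q$ wherever the sparse side appears and invoking (1) at the very end.
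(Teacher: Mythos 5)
Your proposal is correct and follows essentially the same route as the paper's proof: the same clever subgradient choices ($\gamma\,\sign$ of the off-support perturbation for the $\ell_1$ part, the SVD-based $U_1V_1^T$ for the nuclear part), the same cancellation $\langle Q, \Delta_L + G\Delta_S\rangle = 0$ using the certificate, the same lower bound with gaps $1-\|P_{\Tc(L_0)^\perp}(Q)\|$ and $\gamma-\|P_{\Omega(S_0)^\perp}(G^TQ)\|_\infty$, and the same final appeal to condition (1). The interpolation/compactness device you flag as a possible necessity is not needed; the direct choice already works, exactly as in the paper.
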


\begin{proof}
By \eqref{equ:opt}, condition (2) guarantees $(S_0, L_0)$ is a minimizer of \eqref{equ:p}. Suppose $(S_0+N_S,L_0+N_L)$ is also a minimizer of \eqref{equ:p}, then 
\begin{equation}\label{equ:0}
GN_S+N_L=0.
\end{equation}

Let $(Q_S, Q_L)$ be any subgradient of $f(S,L)=\gamma\|S\|_1+\|L\|_*$ at $(S_0, L_0)$, then we have
\begin{align}
&f(S_0+N_S,L_0+N_L)\geq f(S_0, L_0)+\langle (Q_S, Q_L),(N_S, N_L)\rangle\\
\Longleftrightarrow\ &\gamma\|S_0+N_S\|_1+\|L_0+N_L\|_*\geq\gamma\|S_0\|_1+\|L_0\|_*+\langle Q_S, N_S\rangle+ \langle Q_L, N_L\rangle.
\end{align}

Let $\Omega=\Omega(S_0), \Tc = \Tc(L_0)$ for ease of notation.

Since $Q_S\in\gamma\partial\|S_0\|_1$ and $Q_L\in\partial\|L_0\|_1$, we have
$$Q_S=\gamma\sign(S_0)+P_{\Omega^\perp}(Q_S)\text{ with }\|P_{\Omega^\perp}(Q_S)\|_\infty\leq\gamma$$
$$Q_L=UV^T+P_{\Tc^\perp}(Q_L)\text{ with }\|P_{\Tc^\perp}(Q_L)\|\leq1$$

Similarly by condition (2), we have
$$Q=UV^T+P_{\Tc^\perp}(Q), G^TQ=\gamma\sign(S_0)+P_{\Omega^\perp}(G^TQ).$$ Then
\begin{align}\notag
\langle Q_L, N_L\rangle&=\langle UV^T+P_{\Tc^\perp}(Q_L), N_L\rangle=\langle Q-P_{\Tc^\perp}(Q)+P_{\Tc^\perp}(Q_L), N_L\rangle\\\label{equ:QL}
&= \langle P_{\Tc^\perp}(Q_L)-P_{\Tc^\perp}(Q), N_L\rangle+\langle Q, N_L\rangle
\end{align}
and
\begin{align}\notag
\langle Q_S, N_S\rangle&=\langle \gamma \sign(S_0)+P_{\Omega^\perp}(Q_S), N_S\rangle\\\notag
&=\langle G^TQ-P_{\Omega^\perp}(G^TQ)+P_{\Omega^\perp}(Q_S), N_S\rangle\\\label{equ:QS}
&=\langle P_{\Omega^\perp}(Q_S)-P_{\Omega^\perp}(G^TQ), N_S\rangle+\langle Q, GN_S\rangle
\end{align}

We will pick $Q_S$ such that $P_{\Omega^\perp}(Q_S)=\gamma\sign(P_{\Omega^\perp}(N_S))$, and pick
 $Q_L$ such that $P_{\Tc^\perp}(Q_L)=U_1V_1^T$ where $P_{\Tc^\perp}(N_L)=U_1\Sigma_1 V_1^T$ is its SVD.

So by \eqref{equ:0}, \eqref{equ:QL}, \eqref{equ:QS}, and the pick of $Q_S, Q_L$,
\begin{align*}
&\langle Q_L, N_L\rangle+\langle Q_S, N_S\rangle\\
= & \langle P_{\Tc^\perp}(Q_L)-P_{\Tc^\perp}(Q), N_L\rangle+\langle P_{\Omega^\perp}(Q_S)-P_{\Omega^\perp}(G^TQ), N_S\rangle+\langle Q, N_L+GN_S\rangle\\
= & \langle P_{\Tc^\perp}(Q_L)-P_{\Tc^\perp}(Q), P_{T^\perp}(N_L)\rangle+\langle P_{\Omega^\perp}(Q_S)-P_{\Omega^\perp}(G^TQ), P_{\Omega^\perp}(N_S)\rangle+0\\
= &\|P_{\Tc^\perp}(N_L)\|_*-\langle P_{\Tc^\perp}(Q), P_{\Tc^\perp}(N_L)\rangle+\gamma\|P_{\Omega^\perp}(N_S)\|_1-\langle P_{\Omega^\perp}(G^TQ), P_{\Omega^\perp}(N_S)\rangle\\
\geq & \|P_{\Tc^\perp}(N_L)\|_*(1- \|P_{\Tc^\perp}(Q)\|) + \|P_{\Omega^\perp}(N_S)\|_1(\gamma-\|P_{\Omega^\perp}(G^TQ)\|_\infty)
\end{align*}

Given condition (2)-(b)(d), the only way $f(S_0+N_S,L_0+N_L)= f(S_0, L_0)$ is that $\|P_{\Tc^\perp}(N_L)\|_*=0$ and $\|P_{\Omega^\perp}(N_S)\|_1=0$, which means
$N_L\in \Tc$ and $N_S\in\Omega$, so $GN_S=-N_L\in G\Omega\cap \Tc$ implying $GN_S=N_L=0$ by condition (1).

\end{proof}

\begin{corollary}
Let $L_0=U\Sigma V^T$ be its SVD and $M_0=GS_0+L_0$ where $G$ satisfies $S_0$-$\delta$-RINP \eqref{equ:Hrip}. If the following conditions (1)(2) hold, then $(S_0, L_0)$ must be the unique minimizer of $$(\hat S, \hat L)=\argmin_{S, L}\gamma\|S\|_1+\|L\|_*, \quad\text{subject to }L+GS=M_0.
$$
\begin{itemize}
\item[(1)] $G\Omega(S_0)\cap \Tc(L_0)=\{0\}$.
\item[(2)] There exists $Q$ such that
\begin{enumerate}
\item[(a)] $P_{\Tc(L_0)}(Q)=UV^T$, where $U\Sigma V^T$ is the singular value decomposition of $L_0$.
\item[(b)] $\|P_{(\Tc(L_0))^\perp}(Q)\|<1$
\item[(c)] $P_{\Omega(S_0)}(G^TQ)=\gamma\sign(S_0)$
\item[(d)] $\|P_{(\Omega(S_0))^\perp}(G^TQ)\|_\infty<\gamma$
\end{enumerate}
\end{itemize}
\end{corollary}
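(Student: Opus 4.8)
The hypotheses (1) and (2) here are verbatim those of Proposition \ref{pro:dual}, so my plan is to build directly on that proposition (in fact on the last two lines of its proof) and then spend one line exploiting the RINP. Proposition \ref{pro:dual} already gives, for every optimizer $(\hat S,\hat L)$, that $G\hat S = GS_0$ and $\hat L = L_0$; all that remains for \emph{uniqueness} is to promote $G\hat S = GS_0$ to $\hat S = S_0$, i.e.\ to show the only $N_S := \hat S - S_0$ compatible with optimality is $N_S = 0$.

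First I would recall precisely what the proof of Proposition \ref{pro:dual} produces: if $(S_0+N_S,L_0+N_L)$ is also a minimizer, then the displayed chain of inequalities there collapses to equalities, and since (2b) and (2d) give the \emph{strict} bounds $\|P_{\Tc(L_0)^\perp}(Q)\| < 1$ and $\|P_{\Omega(S_0)^\perp}(G^TQ)\|_\infty < \gamma$, the two nonnegative terms $\|P_{\Tc(L_0)^\perp}(N_L)\|_*(1-\|P_{\Tc(L_0)^\perp}(Q)\|)$ and $\|P_{\Omega(S_0)^\perp}(N_S)\|_1(\gamma-\|P_{\Omega(S_0)^\perp}(G^TQ)\|_\infty)$ must each vanish; hence $N_L\in\Tc(L_0)$ and $N_S\in\Omega(S_0)$, and then $GN_S = -N_L \in G\Omega(S_0)\cap\Tc(L_0) = \{0\}$ by (1). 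So one is left with a vector $N_S\in\Omega(S_0)$ lying in the kernel of $G$. Now I would invoke the $S_0$-$\delta$-RINP \eqref{equ:Hrip} with $A = N_S$: since $GN_S = 0$ we have $G^TGN_S = 0$, so $(I-G^TG)N_S = N_S$, and \eqref{equ:Hrip} yields $\|N_S\|_\infty \le \delta\|N_S\|_\infty$; with $\delta < 1$ this forces $N_S = 0$, hence $\hat S = S_0$. (Equivalently, one may simply quote the observation following Definition \ref{def:rinps} that \eqref{equ:Hrip} with $\delta<1$ makes $G$ injective on $\Omega(S_0)$.)

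The one place that needs care is that the fact $N_S\in\Omega(S_0)$ (and $N_L\in\Tc(L_0)$) is recorded inside the \emph{proof} of Proposition \ref{pro:dual} rather than in its statement; so either I would strengthen the statement of Proposition \ref{pro:dual} to carry these two conclusions explicitly, or I would repeat the one relevant line of its argument here. Beyond that the Corollary is an immediate two-line consequence --- the RINP is doing exactly the job it was designed for, namely turning ``$G\hat S$ recovered'' into ``$\hat S$ recovered'' --- and I do not anticipate any genuine obstacle.
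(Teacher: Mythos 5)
Your proof is correct and follows essentially the same route as the paper: invoke the proof of Proposition \ref{pro:dual} to get $N_S\in\Omega(S_0)$ with $GN_S=0$, then use the $S_0$-$\delta$-RINP (i.e.\ injectivity of $G$ on $\Omega(S_0)$ when $\delta<1$) to conclude $N_S=0$. Your explicit one-line computation $\|N_S\|_\infty=\|(I-G^TG)N_S\|_\infty\le\delta\|N_S\|_\infty$ is just a spelled-out version of the remark the paper cites after \eqref{equ:Grip}.
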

\begin{proof}

Following the proof of Proposition \ref{pro:dual}, we have $GN_S = 0$. By the comments following \eqref{equ:Grip}, we know $G$ is invertible on $\Omega(S_0)$, and hence $N_S=0$.
\end{proof}

We need the following lemma to prove the main theorem. It is only used once but having it stand alone provides a better flow and  clarify of the main proof. 

\begin{lemma}\label{lem:AB}
Let $X, Y$  be two finite dimensional Hilbert spaces and $G$ be a linear operator from $X$ to $Y$. Let $A\subset X, B\subset Y$ be two subspaces such that $GA\subset Y$, $GA\cap B=\{0\}$, and $G$ is invertible on $A$.
For any $a\in A, b\in B$, there exists $w\in Y$ such that $P_A(G^Tw)=a, P_B(w)=b$.
\end{lemma}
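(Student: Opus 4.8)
The plan is to reduce the existence of $w$ to a single surjectivity statement, and then establish surjectivity by checking that the adjoint map is injective --- which is exactly where the two hypotheses $GA\cap B=\{0\}$ and invertibility of $G$ on $A$ come in.

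First I would parametrize the candidate as $w=b+u$ with $u\in B^\perp$. Since $b\in B$ and $u\in B^\perp$, this forces $P_B(w)=b$ automatically, for \emph{every} choice of $u$. The remaining requirement $P_A(G^Tw)=a$ then becomes $P_A(G^Tu)=a-P_A(G^Tb)$. So it suffices to prove that the linear map
$$\Phi:B^\perp\to A,\qquad \Phi(u)=P_A(G^Tu)$$
is surjective; one then picks $u$ with $\Phi(u)=a-P_A(G^Tb)$ and sets $w=b+u$.

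Next I would compute the adjoint $\Phi^*:A\to B^\perp$. For $u\in B^\perp$ and $v\in A$, using that $P_A$ and $P_{B^\perp}$ are self-adjoint and act as the identity on $v$ and $u$ respectively,
$$\langle\Phi(u),v\rangle=\langle P_AG^Tu,v\rangle=\langle G^Tu,v\rangle=\langle u,Gv\rangle=\langle u,P_{B^\perp}(Gv)\rangle,$$
so $\Phi^*(v)=P_{B^\perp}(Gv)$. Because $X$ and $Y$ are finite dimensional, $\Phi$ is surjective if and only if $\Phi^*$ is injective.

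Finally I would verify that $\Phi^*$ is injective: if $v\in A$ satisfies $P_{B^\perp}(Gv)=0$, then $Gv\in B$; but also $Gv\in GA$, hence $Gv\in GA\cap B=\{0\}$, i.e.\ $Gv=0$, and since $G$ is invertible on $A$ and $v\in A$ this gives $v=0$. Thus $\Phi^*$ is injective, $\Phi$ is surjective, and the lemma follows. I do not expect a genuine obstacle here; the only points needing a little care are the bookkeeping of the orthogonal projections in the adjoint computation and the initial observation that the $B$-component of $w$ can be fixed ``for free'', which concentrates all of the work onto $B^\perp$.
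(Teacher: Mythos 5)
Your proof is correct and takes essentially the same route as the paper: the paper likewise writes $w=P_{B^\perp}z+b$ and reduces to solving $P_A(G^TP_{B^\perp}z)=a-P_A(G^Tb)$, ruling out failure by producing a nonzero $c\in A$ orthogonal to the range --- which is precisely a nonzero element of the kernel of your adjoint map $\Phi^*(v)=P_{B^\perp}(Gv)$, eliminated by the same use of $GA\cap B=\{0\}$ and invertibility of $G$ on $A$. Your surjectivity-via-injective-adjoint phrasing is just a cleaner packaging of the paper's contradiction argument, so no genuinely new idea or gap is involved.
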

\begin{proof}
Note that if we are able to find $z\in Y$ such that 
\begin{equation}\label{equ:z}
P_A(G^TP_{B^\perp}z)=a-P_A(G^Tb),
\end{equation} 
then $w=P_{B^\perp}z+b\in Y$ would be the desired $w$. This is because it follows $P_A(G^Tw)=P_A(G^TP_{B^\perp}z+G^Tb)=a-P_A(G^Tb)+P_A(G^Tb)=a$.

Suppose there is not such a $z\in Y$ satisfying \eqref{equ:z}, then we can conclude that $P_A(G^TP_{B^\perp}Y)$ must be a proper subspace of $A$, so there exists $c\neq0, c\in A$ such that $c\perp G^TP_{B^\perp}Y$. For any $y\in Y$, we have
$$\langle Gc, P_{B^\perp}y\rangle=\langle c, G^TP_{B^\perp}y\rangle=0,$$
which means $Gc\in B$.

$Gc\in GA\cup B$, implying $Gc = 0$ and further $c=0$ given $G$ is invertible on $A$, a contradiction.

\end{proof}

%
%

\begin{theorem}\label{thm:main}
Given $M_0=HS_0+L_0$ where  $H$ satisfies  \eqref{equ:Grip} with $\delta<1/3$, and $G$ be the column-scaled $H$ in \eqref{equ:Grip}.
If
\begin{equation}\label{equ:incoherence}
\mu_G(S_0)\xi_G(L_0)<\frac{1-3\delta}{6},
\end{equation}
then there exists $\gamma>0$ such that for any optimizer $(\hat S, \hat L)$ of \eqref{equ:p}, we must have $\hat S=S_0, \hat L=L_0$.  
\end{theorem}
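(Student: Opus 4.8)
The plan is to follow the dual-certificate route of \cite{C11}, adapted to the mask. The first step is to discharge hypothesis~(1) of Proposition~\ref{pro:dual}: since $\frac{1-3\delta}{6}<1-\delta$ whenever $0<\delta<1$, the incoherence bound \eqref{equ:incoherence} gives $\mu_G(S_0)\xi_G(L_0)<1-\delta$, so Proposition~\ref{pro:first} yields $H\Omega(S_0)\cap\Tc(L_0)=G\Omega(S_0)\cap\Tc(L_0)=\{0\}$, and the remark after \eqref{equ:Grip} gives that $G$ (hence $H=GD^{-1}$) is injective on $\Omega(S_0)$. It then suffices to produce, for some $\gamma>0$, a matrix $Q$ obeying the optimality system \eqref{equ:opt} for the mask $H$ with \emph{strict} inequalities: running the argument of Proposition~\ref{pro:dual} with $H$ in place of $G$ then gives $\hat L=L_0$, $\hat S-S_0\in\Omega(S_0)$, and $H(\hat S-S_0)=0$, whence $\hat S=S_0$ by injectivity of $H$ on $\Omega(S_0)$. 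Since $H^{T}Q=D^{-1}G^{T}Q$ and $\Omega(S_0)$ is invariant under the diagonal $D^{-1}$, the condition $P_{\Omega(S_0)}(H^{T}Q)=\gamma\,\sign(S_0)$ rewrites as $P_{\Omega(S_0)}(G^{T}Q)=\gamma\,D\,\sign(S_0)$ and the off-support bound becomes a $D$-reweighted $\ell_\infty$ bound, so the construction only ever sees the well-conditioned $G$ and the relevant constants are exactly $\mu_G(S_0)$ and $\xi_G(L_0)$.

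Constructing $Q$ is the crux. Abbreviate $\Tc=\Tc(L_0)$, $\Omega=\Omega(S_0)$, $\mu=\mu_G(S_0)$, $\xi=\xi_G(L_0)$, with targets $UV^{T}\in\Tc$ and $F:=\gamma\,D\,\sign(S_0)\in\Omega$. I would build $Q$ by an alternating-correction scheme: starting from $Q=0$, repeatedly (i) add a correction living in $\Tc$ to zero out the current residual of $P_\Tc(Q)=UV^{T}$, and (ii) add a correction of the form $P_{\Tc^\perp}(GZ)$ with $Z\in\Omega$ to zero out the current residual of $P_\Omega(G^{T}Q)=F$. The bookkeeping rests on three facts: for $B\in\Tc$, $\|G^{T}B\|_\infty\le\xi\|B\|$ (a $\Tc$-correction of spectral size $t$ moves the $\Omega$-side by at most $\xi t$); for $Z\in\Omega$, $\|GZ\|\le\mu\|Z\|_\infty$ and $\|P_\Tc(GZ)\|\le2\|GZ\|$ (the step~(ii) correction has spectral size $\le\mu\|Z\|_\infty$ and moves the $\Tc$-side by at most $2\mu\|Z\|_\infty$); and the RINP \eqref{equ:Grip}, $\|(I-G^{T}G)Z\|_\infty\le\delta\|Z\|_\infty$, which makes step~(ii) realize its target on $\Omega$ up to relative $\ell_\infty$-error $\delta+2\mu\xi$, hence invertible there once $\delta+2\mu\xi<1$. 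Each full round thus multiplies the pair (spectral size of the $\Tc$-residual, $\ell_\infty$ size of the $\Omega$-residual) by a factor of order $\delta+\mu\xi$ (with a factor $2$ from $\|P_\Tc(X)\|\le2\|X\|$), so for $\delta$ and $\mu\xi$ small enough the scheme converges; summing the geometric series, the limit $Q$ satisfies (a) $P_\Tc(Q)=UV^{T}$ and (c) $P_\Omega(G^{T}Q)=F$ exactly, with $\|P_{\Tc^\perp}(Q)\|$ at most a constant (depending only on $\delta,\mu\xi$) times $\mu(\gamma+\xi)$, and $\|P_{\Omega^\perp}(G^{T}Q)\|_\infty$ at most $\xi$ plus a constant times $\gamma+\xi$.

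The last step is to pick $\gamma$. Condition (b), $\|P_{\Tc^\perp}(Q)\|<1$, forces $\gamma$ below an upper bound that is positive precisely when the denominators $1-\delta-2\mu\xi$ and $1-3\delta-6\mu\xi$ are positive; condition (d), $\|P_{\Omega^\perp}(G^{T}Q)\|_\infty<\gamma$ (equivalently, after restoring the $D^{-1}$, a slightly tighter strict bound), forces $\gamma$ above a lower bound of order $\xi$ divided by the surplus of the contraction. Carrying the bookkeeping through, the resulting window for $\gamma$ is nonempty exactly when $\mu\xi<\frac{1-3\delta}{6}$ --- the $3\delta$ from the three places the RINP error enters the round-trip estimate, the $6=2\cdot3$ from the projection-norm factor $2$ --- and $\delta<1/3$ is what keeps $1-3\delta>0$, so a positive $\gamma$ exists. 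Any $\gamma$ in this window yields a certificate, finishing the proof.

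I expect the genuine difficulty to be exactly this constant-chasing: arranging the alternating scheme so that the contraction factor and the two one-sided constraints on $\gamma$ combine to produce the clean cutoff $\frac{1-3\delta}{6}$ rather than a lossier one, and making sure the two norms in play (spectral on the $\Tc$-side, $\ell_\infty$ on the $\Omega$-side) are updated at each step with the correct cross-norm inequalities such as \eqref{equ:AM} and $|\langle A,B\rangle|\le\|A\|_*\|B\|$. A secondary, purely mechanical nuisance is carrying the fixed diagonal rescaling $D$ through every estimate --- by the RINP it forces $G$ to be well-conditioned on $\Omega$, so $\|D\|_\infty$ and $\|D^{-1}\|_\infty$ only enter the constants --- which is precisely where the non-identity mask complicates the classical robust-PCA argument.
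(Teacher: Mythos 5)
Your overall architecture is the paper's: discharge condition (1) via Proposition \ref{pro:first} (indeed $\tfrac{1-3\delta}{6}<1-\delta$), then produce a dual certificate $Q$ for Proposition \ref{pro:dual}, and upgrade $G\hat S=GS_0$ to $\hat S=S_0$ by invertibility of $G$ on $\Omega(S_0)$, as in the corollary following that proposition. Where you genuinely differ is the construction of $Q$: the paper obtains an exact interpolant $Q\in G\Omega+\Tc$ in one shot from the transversality Lemma \ref{lem:AB} and then bounds the correction terms $\epsilon_\Omega,\epsilon_T$ through the implicit equations \eqref{equ:eTg}--\eqref{equ:eOg}, whereas you build $Q$ by an alternating (Neumann-series) correction scheme, which converges because $\delta+2\mu\xi<1$ under the hypotheses (write $\mu=\mu_G(S_0)$, $\xi=\xi_G(L_0)$). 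That route is legitimate, and if you run it with the \emph{unweighted} targets $P_{\Tc}(Q)=UV^T$ and $P_{\Omega}(G^TQ)=\gamma\sign(S_0)$, summing the geometric series reproduces exactly the paper's estimates $\|P_{\Tc^\perp}Q\|\le\frac{\mu(\gamma+\xi)}{1-\delta-2\mu\xi}$ and $\|P_{\Omega^\perp}(G^TQ)\|_\infty\le\xi+\frac{(2\delta+2\mu\xi)(\gamma+\xi)}{1-\delta-2\mu\xi}$, hence the same window \eqref{equ:gammau}, \eqref{equ:gammal} and the same factorization giving the cutoff $\mu\xi<\frac{1-3\delta}{6}$.

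Two gaps remain as written. First, the decisive quantitative step --- deriving the two one-sided constraints on $\gamma$ and verifying that they are compatible precisely when $\mu\xi<\frac{1-3\delta}{6}$ --- is asserted (``carrying the bookkeeping through'') rather than carried out; this computation \emph{is} the content of the theorem, and the heuristic numerology (``$3\delta$ from three places, $6=2\cdot3$'') is not a substitute, especially since a looser accounting in an iterative scheme can easily produce a worse constant. Second, and more substantively, your translation to the $H$-problem via the reweighted target $F=\gamma D\sign(S_0)$ and the $D^{-1}$-weighted off-support bound is inconsistent with your claim that only $\mu_G,\xi_G$ enter: $\|F\|_\infty$ scales with $\max_i|d_i|$ while the off-support requirement on row $i$ is $|(G^TQ)_{ij}|<\gamma|d_i|$, so your window for $\gamma$ necessarily picks up the ratio $\max_i|d_i|/\min_i|d_i|$, and the clean threshold $\frac{1-3\delta}{6}$ does not follow as claimed unless $D$ is a scalar multiple of the identity or the quantities $\mu_G$, $\xi_G$ and the RINP are redefined in the weighted norms. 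The paper's own proof never performs this reweighting: it constructs the certificate with $\gamma\sign(S_0)$ and the plain $\ell_\infty$ norm throughout, in effect applying Proposition \ref{pro:dual} to the constraint written with $G$. You should either do the same or show explicitly how the $D$-dependent factors cancel; as it stands, the reduction in your first paragraph and the threshold claimed in your third do not fit together.
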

\begin{proof}
Let $G=HD$ as usual.
We see that $\frac{1-3\delta}{6}<1-\delta$, so the assumption of Proposition \ref{pro:first} is satisfied.
With Proposition \ref{pro:dual}, it is sufficient to find a matrix $Q$ that satisfies (2)(a)-(d) in Proposition \ref{pro:dual}. We again let  $\Omega=\Omega(S_0), \Tc = \Tc(L_0)$, and $L_0=U\Sigma V^T$ be its SVD.

We will utilize Lemma \ref{lem:AB} where we let $A=\Omega, B=\Tc, X = \Omega+G^{-1}B, Y=G\Omega+\Tc$ (recall  $G^{-1}B$ is the pre-image of $B$ under $G$.). 
We have $GA\subset Y$ and $B\subset Y$. We also have $GA\cap B=\{0\}$ by Proposition \ref{pro:first}.
The conclusion of Lemma \ref{lem:AB} guarantees the existence of $Q\in G\Omega+\Tc$ such that $P_{\Omega}(G^TQ)=\gamma\sign(S_0)$ and $P_\Tc(Q)=UV^T$. 
The rest of the proof is to make sure there exists $\gamma$ such that 
\begin{equation}\label{equ:bd}
\|P_{\Tc^\perp}Q\|<1, \|P_{\Omega^\perp}(G^TQ)\|_\infty<\gamma.
\end{equation}

Since $Q\in G\Omega+\Tc$, we let $Q=GQ_\Omega+Q_T$ for some $Q_\Omega\in\Omega, Q_T\in\Tc$.


Let $Q_\Omega=\gamma\sign(S_0)+\epsilon_\Omega$ and $Q_T=UV^T+\epsilon_T$. So
$$ Q=GQ_\Omega+Q_T=GQ_\Omega+UV^T+\epsilon_T.$$
Applying the projection $P_\Tc$ to the equation $Q=GQ_\Omega+UV^T+\epsilon_T$, we get 
\begin{equation}\label{equ:eTg}
UV^T=P_\Tc(GQ_\Omega)+UV^T+\epsilon_T\quad\Longrightarrow \epsilon_T=-P_\Tc(GQ_\Omega).
\end{equation}
Similarly, we apply the projection $P_\Omega$ to $G^TQ=G^TG(\gamma\sign(S_0)+\epsilon_\Omega)+G^TQ_T$ and obtain
\begin{equation}\label{equ:eOg}
\gamma\sign(S_0)=P_\Omega G^TG(\gamma\sign(S_0)+\epsilon_\Omega)+P_\Omega(G^TQ_T).
\end{equation}
Using definitions of $\mu_G$ and $\xi_G$,
\begin{align}\label{equ:H-TQg}
&\|GQ_\Omega\|\leq\mu_G(S_0)\|Q_\Omega\|_\infty\leq\mu_G(S_0)(\gamma+\|\epsilon_\Omega\|_\infty),\\
\label{equ:HTQ}
&\| G^TQ_T\|_\infty\leq\xi_G(L_0)\|Q_T\|
\leq  \xi_G(L_0)(1+\|\epsilon_T\|).
\end{align}
Denote $e= \xi_G(L_0)$ and $u=\mu_G(S_0)$, then using \eqref{equ:eTg}, 
\begin{equation}\label{equ:eTe}
\|\epsilon_T\|=\|P_\Tc(GQ_\Omega)\|\leq 2\|GQ_\Omega\|\stackrel{\eqref{equ:H-TQg}}{\leq}2u(\gamma+\|\epsilon_\Omega\|_\infty).
\end{equation}
The RINP condition on $G$   implies for $A\in\Omega$,
\begin{equation}\label{equ:Hrip2}
\|(P_\Omega G^TG-I)A\|_\infty=\|P_\Omega (G^TG-I)A\|_\infty\leq\|(G^TG-I)A\|_\infty\leq\delta\|A\|_\infty,
\end{equation}
and
\begin{align}
\|P_{\Omega^\perp}G^TGA\|_\infty&=\|G^TGA-P_\Omega G^TGA\|_\infty=\|G^TGA-A-(P_\Omega G^TGA-A)\|_\infty\\
\label{equ:op}&\leq \|(G^TG-I)A\|_\infty+ \|(P_\Omega G^TG-I)A\|_\infty\leq 2\delta\|A\|_\infty.
\end{align}
So given  $\epsilon_\Omega\in\Omega$,
\begin{align}\notag
\|\epsilon_\Omega\|_\infty&\stackrel{\eqref{equ:Hrip2}}{\leq}\frac{1}{1-\delta}\|P_\Omega G^TG\epsilon_\Omega\|_\infty\\\notag
&\stackrel{\eqref{equ:eOg}}{\leq}\frac{1}{1-\delta}\left(\|(I-P_\Omega G^TG)\gamma\sign(S_0)\|_\infty+\|P_\Omega G^TQ_T\|_\infty\right)\\
&\leq \frac{1}{1-\delta}(\delta\|\gamma\sign(S_0)\|_\infty+\| G^TQ_T\|_\infty)\\
\label{equ:eOe}
&\stackrel{\eqref{equ:HTQ}}{\leq}\frac{1}{1-\delta}(\delta\gamma+e(1+\|\epsilon_T\|))
\end{align}

Combining \eqref{equ:eTe} and \eqref{equ:eOe} yields
$$\|\epsilon_\Omega\|_\infty\leq\frac{\delta\gamma}{1-\delta}+\frac{e}{1-\delta}(1+2u\gamma+2u\|\epsilon_\Omega\|_\infty)\Rightarrow (1-\frac{2ue}{1-\delta})\|\epsilon_\Omega\|_\infty\leq\frac{\delta\gamma+e(1+2u\gamma)}{1-\delta},$$
so
\begin{equation}\label{equ:etog}
\|\epsilon_\Omega\|_\infty\leq\frac{\delta\gamma+e(1+2u\gamma)}{1-\delta-2ue}, \quad\|\epsilon_T\|\leq\frac{2u(\gamma+e)}{1-\delta-2ue}.
\end{equation}
Now
$$
\|P_{\Tc^\perp} Q\|=\|P_{\Tc^\perp}(GQ_\Omega)\|\leq\|GQ_\Omega\|\stackrel{\eqref{equ:H-TQg}}{\leq} u(\gamma+\|\epsilon_\Omega\|_\infty)\stackrel{\eqref{equ:etog}}{\leq}\frac{u(\gamma+e)}{1-\delta-2ue}.
$$ 
We have
\begin{equation}\label{equ:gammau}
\frac{u(\gamma+e)}{1-\delta-2ue}<1\Longleftrightarrow u\gamma+ue<1-\delta-2ue\Longleftrightarrow\gamma<\frac{1-\delta-3ue}{u},
\end{equation}
so having $\gamma<\frac{1-\delta-3ue}{u}$ ensures the first inequality of \eqref{equ:bd}. 
Furthermore,
\begin{align*}
\|P_{\Omega^\perp}(G^TQ)\|_\infty&=\|P_{\Omega^\perp}G^TG(\gamma\sign(S_0)+\epsilon_\Omega)+P_{\Omega^\perp}( G^TQ_T)\|_\infty\\
&\stackrel{\eqref{equ:op}}{\leq}2\delta(\gamma+\|\epsilon_\Omega\|_\infty)+\|P_{\Omega^\perp}( G^TQ_T)\|_\infty
\stackrel{\eqref{equ:HTQ}}{\leq}2\delta(\gamma+\|\epsilon_\Omega\|_\infty)+e(1+\|\epsilon_T\|)\\
&\stackrel{\eqref{equ:etog}}{\leq}\frac{2\delta(\gamma+e)}{1-\delta-2ue}+\frac{e(1-\delta-2ue+2u\gamma+2ue)}{1-\delta-2ue}=\frac{e+\delta e+2ue\gamma+2\delta\gamma}{1-\delta-2ue}.
\end{align*}
To ensure the second inequality of \eqref{equ:bd}, we require
\begin{align}\notag
\frac{e+\delta e+2ue\gamma+2\delta\gamma}{1-\delta-2ue}<\gamma&\Longleftrightarrow e+\delta e+2ue\gamma+2\delta\gamma<\gamma -\gamma\delta-2ue\gamma\\\notag
&\Longleftrightarrow (1+\delta)e<\gamma(1-4ue-3\delta)\\\label{equ:gammal}
&\Longleftrightarrow\gamma>\frac{(1+\delta)e}{1-3\delta-4ue}.
\end{align}

Given \eqref{equ:gammau} and \eqref{equ:gammal}, we require
\begin{align*}
&\frac{(1+\delta)e}{1-3\delta-4ue}<\frac{1-\delta-3ue}{u}\Longleftrightarrow (1+\delta)ue<(1-\delta)(1-3\delta)+12u^2e^2-3ue(1-3\delta)-4ue(1-\delta)\\
&\Longleftrightarrow 12u^2e^2-(8-12\delta)ue+(1-\delta)(1-3\delta)>0 \Longleftrightarrow (2ue-(1-\delta))(6ue-(1-3\delta))>0,
\end{align*}
which can be satisfied if $ue<\frac{1-3\delta}{6}$.
\end{proof}

\begin{remark}
The analysis in Theorem \ref{thm:main} uses similar methods as in \cite{C11}. However, the general setting with $H$ poses extra complication and challenge.
Specifically, for $A\in\Omega(S_0)$, we do not necessarily have $G^TGA\in \Omega(S_0)$. This is essentially why RINP is needed.

\end{remark}

\subsection{Restricted Infinity Norm Property}\label{sec:rinp}
 \begin{lemma}\label{lem:delta}
 Let $d_c(S_0)=d$, that is, each column of $S_0$ has at most $d$ nonzero entries, then $G$ has the  $S_0$-restricted infinity norm property \eqref{equ:Hrip} with $\delta=\max_{|T|\leq d}\|P_T\|_\mi$ where $P=I-G^TG$. \end{lemma}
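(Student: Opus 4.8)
The plan is to exploit the fact that the constraint $A\in\Omega(S_0)$ acts on each column of $A$ independently, so the bound $\|(I-G^TG)A\|_\infty\leq\delta\|A\|_\infty$ should reduce to a column-by-column estimate. Write $P=I-G^TG$, a symmetric $p\times p$ matrix, and for an index set $T\subseteq[p]$ read $P_T$ as the submatrix of $P$ keeping the columns indexed by $T$ (so that $\|P_T\|_\mi$ is the largest absolute row sum of $P$, over all $p$ rows, restricted to those columns).

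First I would fix $A\in\Omega(S_0)$ and a column index $j$, and set $T_j:=\{k:(S_0)_{kj}\neq0\}$, the support of the $j$th column of $S_0$, which has $|T_j|\leq d_c(S_0)=d$. Since $\supp(A)\subseteq\supp(S_0)$, the column $A[:,j]$ is supported on $T_j$, hence $(PA)[:,j]=P\,A[:,j]=P_{T_j}\,A[T_j,j]$. Applying \eqref{equ:AM} gives $\|(PA)[:,j]\|_\infty\leq\|P_{T_j}\|_\mi\,\|A[T_j,j]\|_\infty\leq\|P_{T_j}\|_\mi\,\|A\|_\infty$. Taking the maximum over $j$ and then enlarging the admissible supports from $\{T_j\}_j$ to all $T$ with $|T|\leq d$ yields $\|PA\|_\infty\leq\big(\max_{|T|\leq d}\|P_T\|_\mi\big)\|A\|_\infty$, which is exactly \eqref{equ:Hrip} with the claimed $\delta$, and no extra hypothesis on $G$ is needed.

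There is no real obstacle here; the only care needed is the bookkeeping with the supports and the observation that in $\|P_T\|_\mi$ the row maximum must run over all $p$ rows of $P$ (not merely over $T$), which is precisely what the column-sparsity of $A$ — as opposed to any joint row/column sparsity — produces. It may be worth noting in passing that this $\delta$ is in fact tight once $\max_{|T|\leq d}$ is replaced by the maximum over the true column supports of $S_0$: choosing $A[:,j]$ on $T_j$ with signs matching the worst row of $P_{T_j}$ turns the column-wise inequality into an equality. That refinement is not needed for the statement as given.
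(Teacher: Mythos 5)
Your proof is correct and follows essentially the same route as the paper: both arguments reduce $\|PA\|_\infty$ to a column-by-column bound, use that each column of $A$ is supported on the corresponding column support of $S_0$ (of size at most $d$), and then bound $\|P_{T}a\|_\infty$ by the maximal absolute row sum $\|P_T\|_\mi$ over all sets $|T|\leq d$. Your tightness remark is a harmless extra; no gap to report.
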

 \begin{proof}
 Let $a_j$ be the $j$th column of $A$.
 
 $\max_{\|A\|_\infty\leq1, A\in\Omega(S_0)}\|PA\|_\infty=\max_{\|A\|_\infty\leq1, A\in\Omega(S_0)}\max_{j}\|Pa_j\|_\infty\leq\max_j\max_{\|a_j\|_\infty\leq1, |\text{supp}(a_j)|\leq d}\|Pa_j\|_\infty=\max_j\max_{\|a\|_\infty\leq1, |T|\leq d}\|P_Ta\|_\infty=\max_{\|a\|_\infty\leq1, |T|\leq d}\|P_Ta\|_\infty=\max_{|T|\leq d}\|P_T\|_\mi$.
 \end{proof}
 
Now we   analyze the feasibility of RINP \eqref{equ:Hrip} or \eqref{equ:Grip}  for three special cases of $H$.

\subsubsection{When columns of $H$ are orthogonal}
In this case, we can find $D$ such that $(HD)^T(HD) = I$ so $H$ satisfies \eqref{equ:Hrip} with $\delta =0$. Note that this is when the rows of $HD$ form a Parseval frame of $\R^p$. $H$ is shaped tall and thin with full column rank.
\begin{corollary}\label{cor:delta0}
Given $M_0=HS_0+L_0$ where there exists an invertible diagonal matrix $D$ such that $(HD)^THD=I$. If $\mu_{HD}(S_0)\xi_{HD}(L_0)<\frac{1}{6},$
then there exists $\gamma>0$ such that for any optimizer $(\hat S, \hat L)$ of \eqref{equ:p}, we must have $\hat S=S_0, \hat L=L_0$.  
\end{corollary}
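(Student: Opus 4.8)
The plan is to obtain the corollary as an immediate specialization of Theorem~\ref{thm:main} to the degenerate case $\delta=0$. First I would set $G=HD$, where $D$ is the invertible diagonal matrix furnished by the hypothesis, so that $G^{T}G=(HD)^{T}HD=I$. Then $I-G^{T}G$ is the zero matrix, and hence $\|(I-G^{T}G)A\|_{\infty}=0\le 0\cdot\|A\|_{\infty}$ for \emph{every} matrix $A$, in particular for every $A\in\Omega(S_0)$. Thus $H$ satisfies the scaled-$S_0$-$\delta$-RINP \eqref{equ:Grip} with constant $\delta=0$ (witnessed by this very $D$), and in particular with $\delta<1/3$. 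No estimate is needed at this step: orthogonality of the rescaled columns of $H$ makes $\delta=0$ admissible on the nose.

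Next I would match the incoherence hypothesis. With $\delta=0$ the right-hand side of \eqref{equ:incoherence} is $\frac{1-3\cdot 0}{6}=\frac16$, so the assumption $\mu_{HD}(S_0)\,\xi_{HD}(L_0)<\frac16$ is exactly the inequality $\mu_G(S_0)\,\xi_G(L_0)<\frac{1-3\delta}{6}$ demanded by Theorem~\ref{thm:main} with $G=HD$. Consequently all the hypotheses of Theorem~\ref{thm:main} hold, and its conclusion — that there exists $\gamma>0$ for which every optimizer $(\hat S,\hat L)$ of \eqref{equ:p} satisfies $\hat S=S_0$ and $\hat L=L_0$ — is precisely the assertion of the corollary.

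There is essentially no obstacle here: the only point to verify is the elementary observation that orthogonality of the (rescaled) columns of $H$ forces $I-G^{T}G$ to vanish identically, which both legitimizes $\delta=0$ and collapses the threshold $\frac{1-3\delta}{6}$ to $\frac16$. Should one prefer a self-contained argument, one could instead re-run the proof of Theorem~\ref{thm:main} verbatim with $\delta=0$ substituted throughout — every appearance of $\delta$ then drops out, the bounds \eqref{equ:etog} simplify, and the final condition $(2ue-(1-\delta))(6ue-(1-3\delta))>0$ becomes $(2ue-1)(6ue-1)>0$, satisfied whenever $ue<\frac16$ — but invoking the theorem directly is cleaner.
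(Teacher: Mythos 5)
Your proposal is correct and follows exactly the route the paper intends: since $(HD)^T(HD)=I$, the scaled RINP holds with $\delta=0$, and Corollary~\ref{cor:delta0} is then the $\delta=0$ specialization of Theorem~\ref{thm:main}, whose threshold $\frac{1-3\delta}{6}$ collapses to $\frac16$. (If one is pedantic about Definition~\ref{def:rinp} requiring $0<\delta<1$, note that the zero matrix $I-G^TG$ satisfies the bound for every positive $\delta$ as well, so one may take $\delta>0$ small enough that $\mu_G(S_0)\xi_G(L_0)<\frac{1-3\delta}{6}$ still holds; this changes nothing in your argument.)
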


Note that Corollary \ref{cor:delta0} reduces to \cite[Theorem 2]{C11} exactly when $H$ is the identity matrix. 
%
%

\subsubsection{When $\|H\|\|H^\dagger\|=1$}

This translates to the biggest singular values and the smallest nonzero singular values  of $H$ are equal, or $H$ has condition number 1 ($H$ can be singular and of any shape). The implication is
$H=\|H\|UV^T$ is its SVD with $U\in\R^{m\times k}$ and $V\in\R^{p\times k}$ where $k=\rank(H)$ .

In this case, we will let $D=\frac{1}{\|H\|}I$ which leads to 
\begin{equation}
I-(HD)^T (HD) = I-VV^T=P_{\ker (H)}
\end{equation}


Let $\otimes$ be the kronecker product and  $\vecc(\cdot)$ be the vectorization of a matrix. If one requires a projection $P$ to satisfy $\|PA\|_\infty\leq\delta\|A\|_\infty$ for all $A$, which is $\|(I\otimes P)\vecc(A)\|_\infty\leq\delta\|\vecc(A)\|_\infty$, then $\delta=\|I\otimes P\|_\mi=\|P\|_\mi$, which is at least 1 for any orthogonal projection $P$ given $P=P^2$. 
So it is very much necessary to restrict $A$ to be within $\Omega(S_0)$ in \eqref{equ:Hrip}. With such a restriction, we can make $\delta$ to be close to 0. 

Let $T$ be a subset of $\{1, 2, \cdots, p\}$. For a vector $v$, $v_T$ is the restriction of $v$ on the subset $T$. For a matrix $A$ with $p$ columns, $A_T$ is the submatrix of $A$ consisting of only $|T|$ columns index by $T$.


 \begin{lemma}\label{lem:delta2}
 Assume $\|H\|\|H^\dagger\|=1$. Let $d_c(S_0)=d$ and $\ker(H)=\spn(v_1, \cdots, v_K)\in\R^p$ with $\|v_k\|_2=1$, then  $H$ has scaled-$S_0$-$\delta$-RINP \eqref{equ:Grip} with $\delta\leq \max_{|T|\leq d} \sum_{k=1}^K \|v_{k, T}\|_1\|v_{k}\|_\infty$.
 \end{lemma}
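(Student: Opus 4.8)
The plan is to combine Lemma~\ref{lem:delta} with an explicit bound on $\|P_T\|_{\mi}$ for the specific projection $P = I - G^TG$ arising when $\|H\|\|H^\dagger\|=1$. Recall from the setup that in this case we choose $D = \frac{1}{\|H\|}I$, so $G = HD$, and the displayed computation in the excerpt gives $I - G^TG = I - VV^T = P_{\ker(H)}$, where the columns of $V$ span the row space of $H$ (equivalently $\ker(H)^\perp$). Since $\ker(H) = \spn(v_1,\dots,v_K)$ with the $v_k$ an orthonormal basis, we have the clean formula $P_{\ker(H)} = \sum_{k=1}^K v_k v_k^T$. Thus $P = \sum_{k=1}^K v_k v_k^T$ and $P_T$ (the submatrix of $P$ with rows and columns restricted to $T$, as in Lemma~\ref{lem:delta}) equals $\sum_{k=1}^K v_{k,T} v_{k,T}^T$.

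The core of the argument is then a direct estimate of the maximum absolute row sum of $P_T$. Using $\|A\|_{\mi} = \max_i \sum_j |A_{ij}|$ together with the triangle inequality over the $K$ rank-one pieces, I would write, for any fixed row index $i \in T$,
\[
\sum_{j \in T} |(P_T)_{ij}| = \sum_{j\in T}\Big|\sum_{k=1}^K (v_{k})_i (v_{k})_j\Big| \le \sum_{k=1}^K |(v_k)_i| \sum_{j\in T} |(v_k)_j| = \sum_{k=1}^K |(v_k)_i|\,\|v_{k,T}\|_1.
\]
Bounding $|(v_k)_i| \le \|v_k\|_\infty$ uniformly over $i$ gives $\|P_T\|_{\mi} \le \sum_{k=1}^K \|v_{k,T}\|_1 \|v_k\|_\infty$. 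Taking the maximum over all $T$ with $|T| \le d$ and invoking Lemma~\ref{lem:delta} (with $d = d_c(S_0)$, and noting $\delta$ there is an achievable value so any upper bound on $\max_{|T|\le d}\|P_T\|_{\mi}$ is a valid $\delta$ for the RINP) yields exactly the claimed $\delta \le \max_{|T|\le d}\sum_{k=1}^K \|v_{k,T}\|_1 \|v_k\|_\infty$.

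One small point to be careful about: Lemma~\ref{lem:delta} is stated for the $S_0$-RINP \eqref{equ:Hrip} of a generic matrix $G$, whereas here the conclusion is about the \emph{scaled}-$S_0$-RINP \eqref{equ:Grip} of $H$. This is handled simply by exhibiting the diagonal scaling $D = \frac{1}{\|H\|}I$ (invertible since $H \ne 0$), so that $G = HD$ is the required column-scaled version, and applying Lemma~\ref{lem:delta} to this $G$. I would also explicitly note that $\delta < 1$ must hold for the RINP to be meaningful, but that is not part of the asserted inequality and can be left as a remark; the stated bound can exceed $1$ in general, in which case the lemma is vacuous. I do not anticipate a serious obstacle here — the only mild subtlety is keeping straight the two meanings of ``$T$'' (row/column index set for the submatrix $P_T$ in Lemma~\ref{lem:delta} versus the support pattern) and making sure the orthonormality of the $v_k$ is what licenses writing $P_{\ker(H)} = \sum_k v_k v_k^T$; if the $v_k$ were merely a spanning set the formula would fail and one would first orthonormalize.
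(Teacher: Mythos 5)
Your proof is correct and takes essentially the same route as the paper's: choose $D=\frac{1}{\|H\|}I$ so that $I-G^TG=P_{\ker(H)}=\sum_{k=1}^K v_kv_k^T$, bound the restricted row sums by the triangle inequality over the rank-one terms together with $|(v_k)_i|\le\|v_k\|_\infty$, and then invoke Lemma \ref{lem:delta}. One small correction: in Lemma \ref{lem:delta}, $P_T$ denotes the submatrix of $P$ consisting of the columns indexed by $T$ but \emph{all} rows, so the row index $i$ in your estimate should range over all of $[p]$ rather than only over $T$ -- your computation goes through verbatim since $|(v_k)_i|\le\|v_k\|_\infty$ holds for every $i$, which is exactly how the paper states it.
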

 \begin{proof}
 Let $v_k = (v_{1k}, v_{2k}, \cdots, v_{pk})$.
 $P=P_{\ker(H)}=\sum_{i=1}^Kv_iv_i^T$ and $P_{ij}=\sum_{k=1}^Kv_{ik}v_{jk}$.
 Since the spectral infinity norm is the maximum absolute row sum, for any index set $|T|\leq k$.
 \begin{align*}
 &\|P_T\|_\mi=\max_{i\in[1:p]}\sum_{j\in T}|P_{ij}|\leq \max_{i\in[1:p]}\sum_{j\in T}\sum_{k=1}^K|v_{ik}||v_{jk}|=\max_{i\in[1:p]}\sum_{k=1}^K|v_{ik}|\sum_{j\in T}|v_{jk}|\\
 &=\max_{i\in[1:p]}\sum_{k=1}^K|v_{ik}|\|v_{k, T}\|_1=\sum_{k=1}^K \|v_{k, T}\|_1\max_{i\in[1:p]}|v_{ik}|=\sum_{k=1}^K \|v_{k, T}\|_1\|v_{k}\|_\infty.
 \end{align*}
 The proof is finished using Lemma \ref{lem:delta}.
 \end{proof}
 
 We can see that the constant $\delta$ will be closely related to the null space property (see ~\cite{CDD09, CWW14}) of $H$ given each $v_k\in\ker(H)$. We will pursue this direction in future work.
 
 \begin{example}\label{exa:codim1}
 Let $\|H\|\|H^\dagger\|=1$ and $\ker(H)=\spn(v)$ be a 1-dimensional subspace of $\R^p$ and $d_c(S_0)=d$.  By Lemma \ref{lem:delta2}, $H$ has scaled RINP with 
   \begin{equation}\label{equ:deltav}
  \delta=\max_{|T|\leq d}  \|v_{T}\|_1\|v\|_\infty
  \end{equation}
  The right hand side of \eqref{equ:deltav} should be on the order of $d/p$ as long as the energy in coordinates of $v$ are somewhat equally distributed. The best case scenario will be when $|v_j|=1/\sqrt{p}$ in which case $\delta=\frac{d}{p}$. 
 \end{example}

\subsubsection{$H$ is Gaussian}
Recall that $H$ has $m$ rows and $p$ columns.  
Suppose $H$ is a random Gaussian matrix with i.i.d. entries drawn from $N(0, 1/m)$. The $1/m$ scaling is a common choice in the compressed sensing literature and is adopted here for convenience. However, this scaling is not essential, as the diagonal matrix $D$ can automatically adjust it. 

Combined with Lemma \ref{lem:delta}, the following lemma shows such $H$ has RINP with high probability.
\begin{lemma}
Suppose $H$ is an $m\times p$ random Gaussian matrix with i.i.d. entries drawn from $N(0, 1/m)$. Then
\begin{equation}\label{equ:Hgaussian}
 \Pr \left(\|(I-H^TH)x\|_{\infty} \leq \delta \|x\|_{\infty}  \right) \geq 1- \varepsilon, \quad \delta = C s\sqrt{ \frac{ \log(p/\varepsilon) }{ m } }.
\end{equation}
\end{lemma}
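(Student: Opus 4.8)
The plan is to fix an arbitrary $x \in \R^p$ with a fixed support set $T = \supp(x)$ of size $s$ (here the statement should be read with $\|x\|_0 \le s$, matching the intended application via Lemma \ref{lem:delta}), normalize so that $\|x\|_\infty = 1$, and control each coordinate of $(I - H^TH)x$ separately before taking a union bound. Write $y = (I - H^TH)x$, so that for the $i$th coordinate we have $y_i = x_i - \sum_{k} (H^TH)_{ik} x_k = x_i - \langle H[:,i], Hx\rangle$. The key observation is that $H[:,i]$ is a Gaussian vector independent of the other columns, so conditioning on the columns indexed by $T \setminus \{i\}$ reduces $y_i$ to a quadratic-plus-linear expression in the single Gaussian vector $H[:,i]$; more cleanly, one can decompose $\langle H[:,i], Hx\rangle = x_i \|H[:,i]\|_2^2 + \langle H[:,i], H x^{(i)}\rangle$ where $x^{(i)}$ is $x$ with its $i$th coordinate zeroed out, and then $y_i = x_i(1 - \|H[:,i]\|_2^2) - \langle H[:,i], Hx^{(i)}\rangle$.

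First I would bound the two pieces. For the first piece, $\|H[:,i]\|_2^2$ is a sum of $m$ i.i.d. $N(0,1/m)$ squares, i.e. $\frac1m \chi^2_m$, which concentrates around $1$ with sub-exponential tails: $\Pr(|\,\|H[:,i]\|_2^2 - 1\,| > t) \le 2e^{-cmt^2}$ for $t \lesssim 1$. Since $|x_i| \le 1$, this contributes $O(\sqrt{\log(p/\e)/m})$ to $|y_i|$ with the stated failure probability after a union bound over $i$. For the second piece, I would condition on all columns $H[:,k]$, $k \ne i$ (equivalently on $Hx^{(i)}$, a fixed vector $v$ with $\|v\|_2 \lesssim \sqrt{s}$ on the good event that all the relevant column norms and cross terms behave, by a Gershgorin / operator-norm estimate on the $s \times s$ Gram submatrix, or more simply $\|v\|_2 = \|Hx^{(i)}\|_2 \le \|H_T\| \|x\|_2 \le \|H_T\| \sqrt s$ and $\|H_T\| \lesssim 1$ whp for $m \gtrsim s$). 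Then $\langle H[:,i], v\rangle \mid v \sim N(0, \|v\|_2^2/m)$ is a scalar Gaussian with standard deviation $\lesssim \sqrt{s/m}$, so $|\langle H[:,i], v\rangle| \lesssim \sqrt{s \log(p/\e)/m}$ except with probability $\e/(2p)$. A union bound over the $p$ coordinates $i$ then gives $\|y\|_\infty \le C\sqrt{s}\sqrt{\log(p/\e)/m} \cdot \|x\|_\infty$ as claimed, with the $\sqrt s$ coming entirely from the second (cross) term.

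The main obstacle — and the only place requiring care — is handling the dependence between $H[:,i]$ and the rest of $H$ cleanly, and making sure the conditioning argument is airtight: the vector $v = Hx^{(i)}$ must be treated as fixed when bounding the Gaussian $\langle H[:,i], v\rangle$, and the event $\{\|v\|_2 \lesssim \sqrt s\}$ (uniformly over $i$) must itself hold with high probability, which needs a bound on $\|H_T\|$ that costs an extra $s^2 e^{-cm}$-type term or, done sharply, requires $m \gtrsim s + \log(1/\e)$; this is where the hypotheses implicitly constraining $m$ relative to $s$ enter. A secondary (purely bookkeeping) issue is absorbing the various $O(1)$ constants and the two separate contributions into a single constant $C$, and making sure the final probability of the bad event is the sum $\le \e$ of the two union bounds; I would split the budget as $\e/2$ for the diagonal terms and $\e/2$ for the off-diagonal Gaussian terms. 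No deep new idea is needed beyond standard sub-Gaussian/sub-exponential concentration (Bernstein for $\chi^2$, Gaussian tail for the linear term) plus the union bound over $p$ coordinates, which is exactly what produces the $\sqrt{\log(p/\e)}$ factor.
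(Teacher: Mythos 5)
Your decomposition $y_i=x_i\bigl(1-\|H[:,i]\|_2^2\bigr)-\langle H[:,i],Hx^{(i)}\rangle$ and the two concentration steps are fine for a \emph{fixed} $x$, but the statement you end up proving has the wrong quantifier over $x$, and that is exactly where the lemma's content lies. As the paper uses it (via Definition \ref{def:rinp} and Lemma \ref{lem:delta}), the bound must hold on a single high-probability event \emph{simultaneously for all} $s$-sparse $x$ — RINP is a ``for all $A\in\Omega(S_0)$'' condition on a whole subspace, and the paper's own proof accordingly bounds $\sup_{\|x\|_\infty\le1,\,\|x\|_0\le s}\|(I-H^TH)x\|_\infty$. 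In your argument the vector $v=Hx^{(i)}$ on which you condition depends on the \emph{values} $x_j$, so the Gaussian tail bound for $\langle H[:,i],v\rangle$ is a statement about that one $x$; ``for each fixed $x$, w.h.p.'' does not imply ``w.h.p., for all $x$ supported in $T$,'' and you cannot union-bound over the uncountably many admissible $x$.

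This is not a repairable bookkeeping issue if you want to keep the $\sqrt{s}$: for each row $i$, the uniform quantity is $\sup_{\|x\|_\infty\le1,\ \supp(x)\subseteq T}\bigl|[(I-H^TH)x]_i\bigr|=\sum_{j\in T}\bigl|(I-H^TH)_{ij}\bigr|$, a row $\ell_1$ norm of $s$ entries each of typical size $1/\sqrt m$, hence of order $s/\sqrt m$. So the cancellation that gives you $\sqrt{s}$ at a fixed $x$ necessarily disappears in the uniform statement, and the linear-in-$s$ factor in the stated $\delta=Cs\sqrt{\log(p/\e)/m}$ is essentially forced — your sharper constant is a symptom of having proved a weaker (pointwise) claim. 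The paper's route avoids the problem entirely: Bernstein's inequality for the sub-exponential products gives $\max_{i,j}|(I-H^TH)_{ij}-0|\lesssim\sqrt{\log(p/\e)/m}$ (with the Kronecker delta on the diagonal) after a union bound over the $p^2$ entries, and then the triangle inequality over the at-most-$s$-element support yields the factor $s$, a bound that holds deterministically for \emph{every} sparse $x$ on that single event. If you replace your conditional-Gaussian treatment of the cross term by such an entrywise bound on $\langle H[:,i],H[:,j]\rangle$ plus the triangle inequality (keeping your $\chi^2$ bound for the diagonal, which is the $i=j$ case), your proof becomes correct and coincides with the paper's; it also removes the extra hypothesis $m\gtrsim s+\log(1/\e)$ you needed for $\|H_T\|\lesssim1$, which the paper's argument does not require.
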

\begin{proof}
\[
 H^\top H = \frac{1}{m} \sum_{k=1}^m \bar h_k \bar h_k^\top.
\]
where $\bar h_k$ is the $k$th row of $H$ scaled up by $\sqrt m$, i.e., $\bar h_k = \sqrt{m}H[i,:]$.

We want to bound the quantity:
\[
\sup_{\substack{x \in \mathbb{R}^p \\ \|x\|_\infty \leq 1,\, \|x\|_0 \leq s}} \left\| (I - H^\top H) x \right\|_\infty.
\]

\paragraph{Entrywise bound.} For any \( i, j \in \{1, \dots, p\} \), consider the individual entries of \( \hat{\Sigma} :=H^TH\). We have
\[
\hat{\Sigma}_{ij} = \frac{1}{m} \sum_{k=1}^m \bar h_k^{(i)} \bar h_k^{(j)}.
\]

Since the entries of \( \bar h_k \) are sub-Gaussian, their products \( \bar h_k^{(i)} \bar h_k^{(j)} \) are sub-exponential random variables. Let us define:
\[
X_k := \bar h_k^{(i)} \bar h_k^{(j)} - \mathbb{E}[\bar h_k^{(i)} \bar h_k^{(j)}].
\]

Then \( \{X_k\}_{k=1}^m \) are independent, mean-zero, sub-exponential random variables. By Bernstein's inequality for sub-exponential random variables, we have:
\[
\Pr\left( \left| \frac{1}{m} \sum_{k=1}^m X_k \right| \geq \varepsilon \right)
\leq 2 \exp\left( -c m \min\left( \frac{\varepsilon^2}{K^2}, \frac{\varepsilon}{K} \right) \right),
\]
for some constant \( c > 0 \) where $K$ is  the sub-exponential norm of $X_k$ ($K\sim O(1)$). 

In particular, for small \( \varepsilon \lesssim K \), we have the simplified sub-Gaussian-style tail:
\begin{equation}\label{equ:tail}
\Pr\left( \left| \hat{\Sigma}_{ij} - \delta_{ij} \right| \geq \varepsilon \right)
\leq 2 \exp\left( - c m \varepsilon^2 \right),
\end{equation}
where $\delta_{ij} =  \mathbb{E}[\bar h_k^{(i)} \bar h_k^{(j)}]$ is the
Kronecker delta.

This bound \eqref{equ:tail} holds uniformly for each fixed pair \( (i, j) \). To obtain a uniform bound over all \( i, j \in [p] \), apply a union bound:
\[
\Pr\left( \max_{i,j} \left| \hat{\Sigma}_{ij} - \delta_{ij} \right| \geq \varepsilon \right)
\leq 2p^2 \exp\left( - c m \varepsilon^2 \right).
\]

Therefore, setting $\bar \varepsilon :=2p^2 \exp\left( - c m \varepsilon^2 \right) $, with  probability at least \( 1 - \bar \varepsilon \), we have:
\[
\max_{i,j} \left| \hat{\Sigma}_{ij} - \delta_{ij} \right| \leq C \sqrt{ \frac{ \log(p/\bar \varepsilon) }{ m } },
\]
for some constant \( C \) depending on the sub-Gaussian norm.

\paragraph{Action on sparse vectors.} For an \( s \)-sparse vector \( x \in \mathbb{R}^p \) with \( \|x\|_\infty \leq 1 \), for each $i$, the \( i \)-th coordinate of \( (I - H^\top H)x \) is:
\[
\left[ (I - H^\top H)x \right]_i = x_i - \sum_{j=1}^p \hat\Sigma_{ij}  x_j = \sum_{j \in \text{supp}(x)} x_j ( \delta_{ij} - \hat\Sigma_{ij} ).
\]

So, for all $i \in 1,...,p$,
\[
 \left| \left[ (I - H^\top H)x \right]_i \right| \leq \sum_{j \in \text{supp}(x)} \left| \delta_{ij} - \hat\Sigma_{ij}  \right| \leq s \cdot \max_{i,j} \left| \delta_{ij} - \langle \bar h_i, \bar h_j \rangle \right| \leq C s\sqrt{ \frac{ \log(p/\bar \varepsilon) }{ m } }.
\]
with probability at least $1-\bar\varepsilon$.


\end{proof}

\section{Conditions on low rank and sparse matrices}
\subsection{Incoherence condition}
We will analyze the incoherence condition \eqref{equ:incoherence} in this section. 
\begin{align}\label{equ:muHe1}
\mu_G(S)=\max_{A\in\Omega(S), \|A\|_\infty\leq1}\|GA\|\leq \|G\|\mu(S).
\end{align}

Recall that $d(S)$ is the maximum number of non-zero entries per row or per column of $S$. It is proven in~\cite[Proposition 3]{C11} that $\mu(S)\leq d(S)$, so we simply have 
\begin{equation}\label{equ:muHe}
\mu_G(S)\leq  \|G\| d(S).
\end{equation}

Let $\alpha(G):= \max_i\|g_i\|_2$ be defined as the maximum column norm of $G$, $\beta(U) := \max_i \|P_U e_i\|_2$ be the incoherence of the subspace $U$, and $\beta(U,G):= \max_i\|P_U g_i\|_2$ be the incoherence between columns of $G$ and the subspace $U$. Then we show $\xi_G(L)$ can be bounded in terms of the three quantities $\beta(U,G)$, $\alpha(G)$ and $\beta(V)$, where $L = U\Sigma V$ is its SVD.
\begin{align}\notag
\xi_G(L)& = \max_{B\in \Tc(L), \|B\|\leq1}\|G^T B\|_\infty  =\max_{B\in \Tc(L), \|B\|\leq1}\|G^T P_{\Tc(L)}B\|_\infty\leq \max_{ \|B\|\leq1}\|G^T P_{\Tc(L)}B\|_\infty \\\notag
&\stackrel{\eqref{equ:pt}}{=}\max_{ \|B\|\leq1}\|G^T(P_UB+(I-P_U)BP_V)\|_\infty\\\notag
& \leq \max_{B \textrm{ unitary} } \|G^T P_U B||_{\infty} + \max_{B \textrm{ unitary} } \|G^T(I-P_U)BP_V\|_{\infty} \\\notag
& =\max_{B \textrm{ unitary} } \max_{i,j} |g_i^T P_U Be_j| +  \max_{B \textrm{ unitary} } \max_{i,j} |g_i^T(I-P_U)BP_Ve_j| \\\notag
& \leq \max_{B \textrm{ unitary} }\max_i \|P_U g_i\|_2 \max_j \|B e_j\|_2 + \max_i \|(I-P_U)g_i\|_2\max_j \|BP_Ve_j\|_2\\\notag
&\leq \max_i \|P_U g_i\|_2+ \max_i \|g_i\|_2\max_j \|P_Ve_j\|_2\\\label{equ:xiHe}
& = \beta(U,G) + \alpha(G)\beta(V) \equiv \inc(L,G).
\end{align} 


Combining \eqref{equ:muHe} and \eqref{equ:xiHe}, Theorem \ref{thm:main} implies the following.
\begin{corollary}
Given $M_0=HS_0+L_0$ where $G=HD$ satisfies the $S_0$-$\delta$-RINP \eqref{equ:Hrip} with $\delta<1/3$ for some invertible diagonal matrix $D$.
If 
\begin{equation}\label{equ:ni}
d(S_0)\inc(L_0,G)<\frac{1-3\delta}{6\|G\|},
\end{equation}
then for any optimizer $(\hat S, \hat L)$ of \eqref{equ:p}, we must have $\hat S=S_0, \hat L=L_0$.  
\end{corollary}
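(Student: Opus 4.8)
The plan is to derive this corollary as a direct consequence of Theorem~\ref{thm:main}, simply by substituting the upper bounds on $\mu_G(S_0)$ and $\xi_G(L_0)$ established in equations \eqref{equ:muHe} and \eqref{equ:xiHe} into the incoherence hypothesis \eqref{equ:incoherence}. No new analytic work is required; the content is entirely bookkeeping.

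First I would observe that the hypotheses of Theorem~\ref{thm:main} are in place: $G = HD$ satisfies the $S_0$-$\delta$-RINP with $\delta < 1/3$, which is exactly the requirement \eqref{equ:Grip} used by the theorem (here $G$ itself has the unscaled RINP, which is the relevant case). So it suffices to verify the incoherence inequality $\mu_G(S_0)\xi_G(L_0) < \frac{1-3\delta}{6}$. By \eqref{equ:muHe} we have $\mu_G(S_0) \le \|G\|\, d(S_0)$, and by \eqref{equ:xiHe} we have $\xi_G(L_0) \le \inc(L_0, G)$. Multiplying these two bounds gives
\[
\mu_G(S_0)\,\xi_G(L_0) \;\le\; \|G\|\, d(S_0)\, \inc(L_0, G).
\]
Now the hypothesis \eqref{equ:ni}, namely $d(S_0)\,\inc(L_0,G) < \frac{1-3\delta}{6\|G\|}$, rearranges (multiplying both sides by $\|G\| > 0$) to $\|G\|\, d(S_0)\,\inc(L_0,G) < \frac{1-3\delta}{6}$, so the chain above yields $\mu_G(S_0)\,\xi_G(L_0) < \frac{1-3\delta}{6}$, which is precisely \eqref{equ:incoherence}.

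With both hypotheses of Theorem~\ref{thm:main} verified, the theorem delivers a $\gamma > 0$ for which every optimizer $(\hat S, \hat L)$ of \eqref{equ:p} satisfies $\hat S = S_0$ and $\hat L = L_0$, which is the conclusion of the corollary. The only minor point to be careful about is that $\|G\| \neq 0$ (so that dividing or multiplying by it is legitimate), but this is immediate since $\delta < 1$ forces $G$ to be invertible on $\Omega(S_0)$ by the remarks following \eqref{equ:Grip}, in particular $G \neq 0$. There is no real obstacle here; the entire proof is the two-line substitution above.
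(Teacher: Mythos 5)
Your proof is correct and follows the same route as the paper, which obtains the corollary by combining the bounds \eqref{equ:muHe} and \eqref{equ:xiHe} with Theorem \ref{thm:main} exactly as you do. The substitution of $\mu_G(S_0)\le \|G\|\,d(S_0)$ and $\xi_G(L_0)\le \inc(L_0,G)$ into \eqref{equ:incoherence} is all that is needed, and your verification that the RINP hypothesis of the theorem is met is accurate.
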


 \begin{example}
 Following Example \ref{exa:codim1}, let $H$ be a square matrix whose nonzero singular values are all 1. This implies $\alpha(H)\leq1$ and $\|H\|=1$. We also let $\rank(H)=m-1$ and $\ker(H)=\spn(v)$ where each coordinate of $v$ has magnitude 1, then $H$ has \eqref{equ:Hrip} ($G=H$ here) with $\delta\leq d(S_0)/m$.  Let $d=d(S_0)$ for short. 
 \begin{align*}
 \left(\inc(L_0, H)+\frac{1}{2m}\right)d<\frac{1}{6}\Longleftrightarrow d\cdot\inc(L_0, H)<\frac{1-3d/m}{6}\Longrightarrow \eqref{equ:ni}.
 \end{align*}
 Furthermore, $\left(\inc(L_0, H)+\frac{1}{2m}\right)d<\frac{1}{6}$ is true if 
 $$ \left(\beta(U,H) + \beta(V)+\frac{1}{2m}\right)d<\frac{1}{6}.$$
  \end{example}

\subsection{Random sparse $S$ and low rank $L$ with a Gaussian $H$}
For the standard setting with an incoherent low-rank matrix 
$L$, a sparse matrix 
$S$, and Gaussian mask
$H$, we apply the theorem to derive an exact recovery guarantee in terms of the sparsity level and rank. 

A matrix $S$ is said to follow the \textbf{random sparsity model} with support size $s$ if its support is chosen uniformly at random from the collection of all support sets of cardinality $s$. There is no assumption made on the values of $S$ at its support. The random sparsity model is common and can be found in \cite{C11} for example.

A matrix $L\in\R^{m\times n}$ is said to follow the \textbf{right-side random orthogonal model} with rank $r$ if $L=U\Sigma V^T$ where the right singular vectors $V$ are drawn uniformly at random from the collection of rank $r$ partial isometries in $\R^{n\times k}$ and the left singular vectors $U$ are arbitrary. No restriction is placed on the singular values. $L$ is said to follow the \textbf{random orthogonal model} if both $U$ and $V$ are drawn uniformly at random from the collection of partial isometries.

We now estimate $\inc(L,H)$ where 
$L$ is drawn from the right-side random orthogonal model, and $H$ is a Gaussian measurement matrix.

Let $L= USV^T$, since $V$ is drawn randomly from the collection of rank $r$ partial isometries we have, by \cite{CR09},
\begin{equation}
\Pr\left(\max_{i}\|V^Te_i\|_2\leq\sqrt{\frac{r+16\sqrt{r\log n}}{n}}\right)\geq 1-2n^{-3}.
\end{equation}

Since $H \in \mathbb{R}^{m\times p}$ has i.i.d. entries drawn from $N(0, \frac{1}{m})$, 
its $i$th column  $h_i$ follows $\|h_i\|\sim \sqrt{\frac{1}{m}\chi_m^2}$, and by the union bound, we have
\[
 \Pr \left(\max_{i}\|h_i\|_2 \leq 1+\sqrt{\frac{2}{m}\log \frac{p}{\varepsilon}}\right) \geq 1-\varepsilon
\]

For the fix $U \in \mathbb{R}^{m\times r}$, using the fact that $Uh_i$ is still i.i.d. Gaussian vector, we have
\begin{align*}
& \Pr\left(\max_{i}\|U^Th_i\|_2 \leq  \sqrt{\frac{5r\log p/\varepsilon  }{m}}\right)\geq 1- \varepsilon.
\end{align*} 

Combining the above, we conclude that, with high probability \begin{equation}\inc(L,H)\leq \beta(U,H) + \alpha(H)\beta(V) =  \sqrt{\frac{5r\log p/\varepsilon  }{m}} + \left(1+\sqrt{\frac{2}{m}\log \frac{p}{\varepsilon}}\right)\sqrt{\frac{r+16\sqrt{r\log n}}{n}}.\label{equ:inc} 
\end{equation}

For the random sparsity model, we have the following lemma, whose proof can be found in the Appendix.
\begin{lemma}
Suppose $S_0\in\R^{p\times n}$ is drawn from the random sparsity model with support size $s$, then 
\begin{equation}\label{equ:S0r}
d(S_0)\leq\max\{\frac{s}{n}\log(n), \frac{s}{p}\log(p)\}, 
\end{equation}
with high probability. 
\end{lemma}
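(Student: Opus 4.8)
The plan is to control the two row/column degree quantities $d_r(S_0)$ and $d_c(S_0)$ separately by a balls-into-bins / Chernoff argument, and then combine them via $d(S_0)=\max\{d_r(S_0),d_c(S_0)\}$. Fix a row index $i\in[p]$. Under the random sparsity model, the support is a uniformly random $s$-subset of the $pn$ entries; the number of nonzero entries in row $i$, call it $R_i$, is hypergeometric with mean $s/p$. I would first record the standard tail bound for the hypergeometric distribution (or, more conveniently, invoke negative association / the fact that sampling without replacement is dominated in the convex order by sampling with replacement, so Chernoff bounds for $\mathrm{Binomial}(s,1/p)$ apply). This gives, for any $t\ge 1$, a bound of the form $\Pr(R_i\ge t\cdot s/p)\le \exp(-c\, t\,(s/p)\,\log t)$ or the cruder multiplicative Chernoff bound $\Pr(R_i\ge a)\le (e\,s/(pa))^{a}$.

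Next I would choose the threshold $a_r=\Theta\big(\tfrac{s}{p}\log p\big)$ (more precisely something like $a_r = e^2\max\{s/p,\log p\}$ or $a_r=\tfrac{c s}{p}\log p$ with $c$ large enough). Plugging $a=a_r$ into the Chernoff bound makes $\Pr(R_i\ge a_r)\le p^{-2}$, say; a union bound over the $p$ rows then yields $\Pr(d_r(S_0)\ge a_r)\le p^{-1}$. Symmetrically, for a fixed column $j\in[n]$ the count $C_j$ is hypergeometric with mean $s/n$, and choosing $a_c=\Theta\big(\tfrac{s}{n}\log n\big)$ and union bounding over the $n$ columns gives $\Pr(d_c(S_0)\ge a_c)\le n^{-1}$. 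Combining the two events, with probability at least $1-p^{-1}-n^{-1}$ we have $d(S_0)=\max\{d_r(S_0),d_c(S_0)\}\le \max\{a_r,a_c\}$, which is exactly the claimed bound $\max\{\tfrac{s}{n}\log n,\tfrac{s}{p}\log p\}$ up to the absolute constant absorbed into ``with high probability''. I would state the lemma's inequality as holding up to such a constant, or adjust $a_r,a_c$ to match the displayed constants precisely.

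The one genuinely delicate point — and the step I expect to be the main obstacle to write cleanly — is justifying the Chernoff/concentration bound for the \emph{without-replacement} (hypergeometric) count rather than the independent Bernoulli model, and handling the regime $s/p<1$ (respectively $s/n<1$), where the mean is below $1$ and the ``$\log$'' factor is doing the real work: here one cannot just say ``$a_r$ is a constant multiple of the mean'', and the bound $\Pr(R_i\ge a_r)\le (e s/(p a_r))^{a_r}$ must be shown to beat $p^{-2}$ when $a_r\asymp \log p$. This is a short computation: with $a_r=K\log p$ and $s/p\le 1$ one gets $(e/(K\log p))^{K\log p}\le p^{-K\log(K\log p/e)}\le p^{-2}$ for $K$ an absolute constant and $p$ large. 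The classical domination result (e.g. Hoeffding's observation that a hypergeometric is a mixture that is dominated in convex order by the corresponding binomial, or the negative-association property of the indicator variables $\mathbf 1\{(i,j)\in\mathrm{supp}(S_0)\}$) is what licenses using the binomial Chernoff bound verbatim; I would cite this and then carry out the two-line threshold computation in each regime. Everything else — the union bounds, the final $\max$ — is routine.
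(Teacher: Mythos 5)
Your proposal is correct and follows essentially the same route as the paper: fix a row (resp.\ column), observe that its nonzero count is hypergeometric with mean $s/p$ (resp.\ $s/n$), apply a tail bound at threshold $\frac{s}{p}\log p$ (resp.\ $\frac{s}{n}\log n$), union bound over rows and columns, and take the maximum. The only cosmetic difference is that the paper invokes a Hoeffding--Serfling-type concentration inequality for the hypergeometric distribution directly, whereas you route through binomial Chernoff via stochastic domination/negative association, and you absorb an absolute constant into ``with high probability''---both are harmless variations.
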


Assuming $m=n=p$,
\eqref{equ:ni} is satisfied with high probability if 
\begin{equation}
\frac{s}{n}\log^2(n)\sqrt{\frac{r+16\sqrt{r\log n}}{n}}\leq\frac{1-3\delta}{12}, \delta \lesssim \frac{s \sqrt{\log n}}{n^{3/2}}
\end{equation}
This means $s$ can be as large as $n^{3/2}/r$

\section{Numerical Experiments}\label{sec:num}
There are many algorithms for solving the robust PCA problem. The ADMM~\cite{boyd2011distributed} algorithm is a popular choice \cite{wang2013solving}. A non-convex algorithm based on alternating projections is presented in \cite{netrapalli2014non}, and an accelerated version that significantly improves the accuracy is proposed in \cite{CCW19}. The work  \cite{Ce20} proposes another non-convex algorithm using CUR decomposition.

Since our novel problem \eqref{equ:p} is convex and falls into the form of (3.1) in \cite{boyd2011distributed}, we use the standard 3-step ADMM algorithm \cite[(3.5)-(3.7)]{boyd2011distributed} to solve it. The general CVX package~\cite{grant2014cvx} was also tried, but it resulted slow and poor performance.
All numerical experiments were conducted on a MacBook Pro with Apple M3 chip and 8GB RAM, using Matlab 2023b. 

\subsection{Blurring Mask}
In the first experiment we let $m=n=p=100$. The sparse matrix $S_0$ is drawn from the random sparsity model with varying support size $s\in\{0.01mn, 0.03mn, 0.06mn, \cdots, 0.3mn\}$. Each nonzero entries of  $S_0$ are independently drawn from a uniform distribution in $[1, 2]$. The low rank matrix $L_0$ is drawn from the random orthogonal model with varying rank $r\in\{1, 4, 7, \cdots, 28\}$. 
The mask $H\in\R^{100\times100}$ is a rank 99 matrix with 
$$\ker(H)=\spn\{(1, -1, 1, -1, \cdots, 1, -1)\}.$$ 
Specifically, we first construct circulant $\tilde H$ whose first row is $[1, 1, 0, \cdots, 0]$, then $\tilde H=U_H\Sigma_HV_H^T$ is its SVD and we let $H=U_HV_H^T$. 
So $H$ has $S_0$-RINP constant $\delta=d_c(S_0)/100$ as argued in Example \ref{exa:codim1}.

For each fixed $s$ and $r$, we generate 8 instances of $L_0, S_0$ to produce $M_0=HS_0+L_0$. We then use ADMM to solve \eqref{equ:p} with $\gamma=1/\sqrt{m}=0.1$. The relative error of recovering $S_0$ and $L_0$ is defined as $\frac{\|\hat S-S_0\|_F}{\|S_0\|_F}$ and $\frac{\|\hat L-L_0\|_F}{\|L_0\|_F}$ respectively where $\|\cdot\|_F$ is the Frobenius norm. The relative errors are then averaged over these 8 trials and shown in Figure \ref{fig:1} as two heatmaps. Left of Figure \ref{fig:1} shows the relative error in recovering the sparse matrix and right of Figure \ref{fig:1} shows the  relative error in recovering the low rank matrix. The $x$-axis shows the sparsity level in terms of the proportion of the nonzero entries in $S_0$. The $y$-axis shows the rank in terms of ratio of the possible max rank. The color of each pixel indicates the averaged relative error at each corresponding sparsity and rank level. The performance of the recovery is quite well and surpasses the theoretical guarantee as expected. An interesting phenomenon is that the performance on recovering $S_0$ is a lot better than that of $L_0$.

\begin{figure}[htb]
\includegraphics[width=1\textwidth]{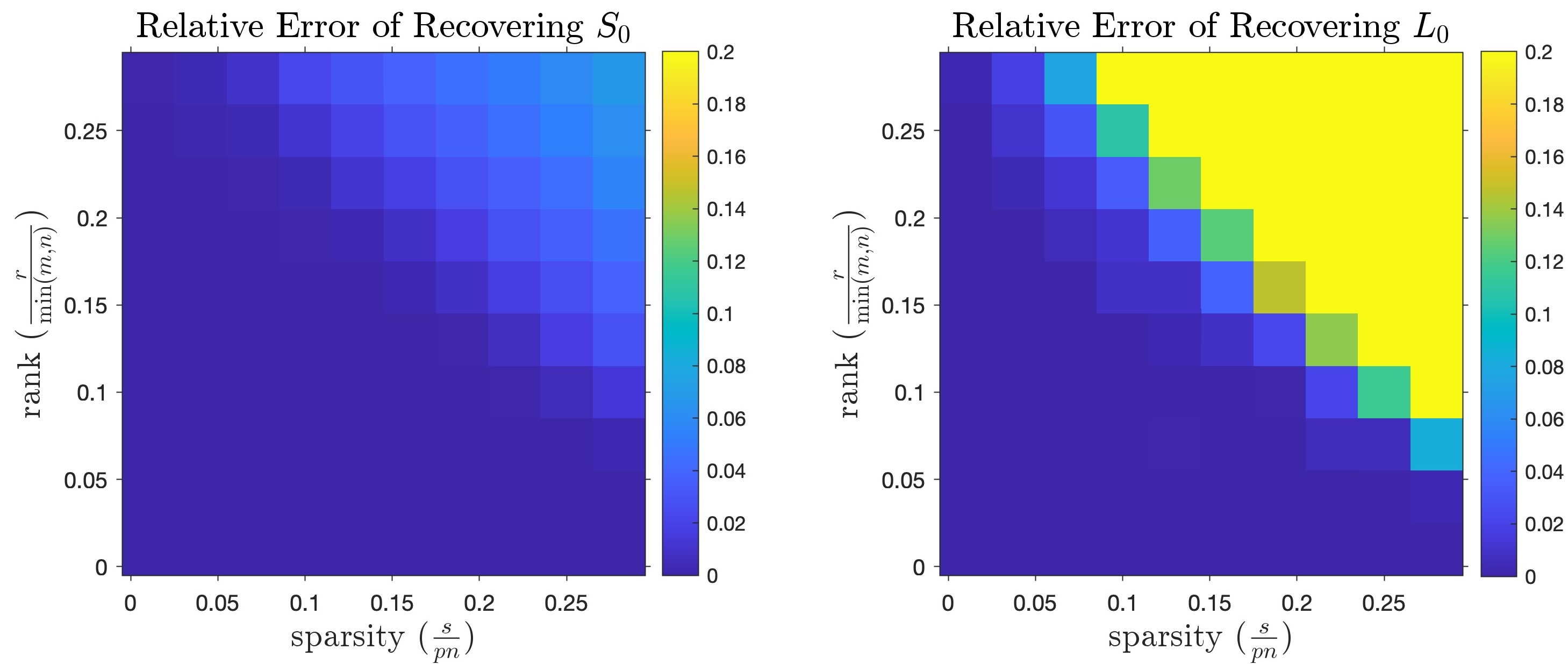}
\caption{Relative errors when $H$ is a blurring mask}\label{fig:1}
\end{figure}

\subsection{Random mask}
This experiment is similar to the design of the previous experiment, where the only difference is $H$ is random Gaussian with i.i.d. entries drawn from $N(0, 1/m)$. 

We also let $m=n=p=100$. The sparse matrix $S_0$ is drawn from the random sparsity model with varying support size $s\in\{0.01mn, 0.03mn, 0.06mn, \cdots, 0.3mn\}$. Each nonzero entries of  $S_0$ are independently drawn from a uniform distribution in $[1, 2]$. The low rank matrix $L_0$ is drawn from the random orthogonal model with varying rank $r\in\{1, 4, 7, \cdots, 28\}$. 
For each fixed $s$ and $r$, we generate 8 instances of $L_0, S_0$ to produce $M_0=HS_0+L_0$. We then use ADMM to solve \eqref{equ:p} with $\gamma=1/\sqrt{m}=0.1$. The relative error of recovering $S_0$ and $L_0$ is defined as $\frac{\|\hat S-S_0\|_F}{\|S_0\|_F}$ and $\frac{\|\hat L-L_0\|_F}{\|L_0\|_F}$ respectively where $\|\cdot\|_F$ is the Frobenius norm. The relative errors are then averaged over these 8 trials and shown in Figure \ref{fig:heat2} as two heatmaps. Left of Figure \ref{fig:1} shows the relative error in recovering the sparse matrix and right of Figure \ref{fig:1} shows the  relative error in recovering the low rank matrix. The $x$-axis shows the sparsity level in terms of the proportion of the nonzero entries in $S_0$. The $y$-axis shows the rank in terms of ratio of the possible max rank. The color of each pixel indicates the averaged relative error at each corresponding sparsity and rank level. 
We can see that the performance is not as good as in Figure \ref{fig:1}. Moreover, when support size exceeds 25\% of the entries, recovery is not possible no matter what the rank is.

\begin{figure}[htb]
\includegraphics[width=1\textwidth]{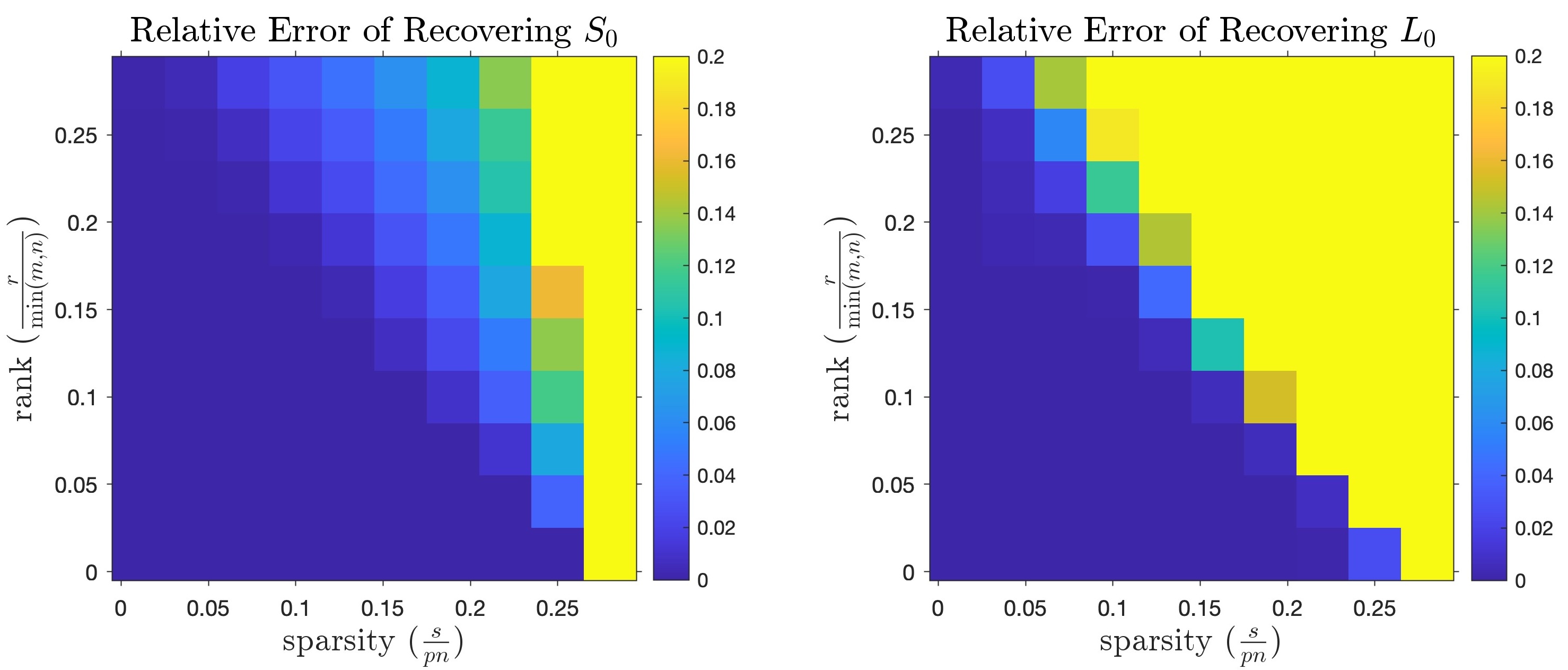}
\caption{Relative errors when $H$ is random}\label{fig:heat2}
\end{figure}
\subsection{EDA decomposition}\label{sec:num_eda}

We perform a simulated EDA decomposition following the signal model in the introduction. We have $m=240, p=160, n=50$.
The kernel $h\in\R^{160}$ is obtained by sampling the function
\begin{equation}\label{equ:filter}
f(t)=2(e^{-t/\tau_1}-e^{-t/\tau_2})
\end{equation}
at the rate of 4 samples per second in the interval $t\in[0, 40]$. This generates the discrete vector $(f_1, f_2, \cdots, f_{160})$. We use $\tau_1=2, \tau_2=0.75$. This is supported by psychophysiology literature such as~\cite{Ae05}. We will use a centered convolution, i.e., the corresponding $H\in\R^{240\times160}$ is the center block of
$\begin{bmatrix}
f_1&&&\\
\vdots&f_1&&\\
\vdots&\vdots&\ddots&\\
f_{160}&\vdots&&f_1\\
&f_{160}&&\vdots\\
&&\ddots&\vdots\\
&&&f_{160}
\end{bmatrix}$. 
We generate  $X=[x_1, \cdots, x_{n}]$ from the random sparsity model with the number of nonzero entries being $s$. Each column of $X$ represents an SCR event. We run our experiment with different values of $s$ from the set $\{4n, 6n, 8n, \cdots, 30n\}$. This means the number of SCR events in each EDA signal ranges in $\{4, 6, 8, \cdots, 30\}$, which is the $x$-axis of Figure \ref{fig:eda}. For the tonic component $T$, we first create a slow varying vector in $\R^{mn}$. It is then reshaped to an $m\times n$ matrix to make $T$. The matrix $T$ is very close to a rank-1 matrix. The noise matrix $E$ is produced by iid Gaussian with mean 0 and standard deviation 0.01.

For each fixed $s$, we generate 8 instances of $X, T, E$ as described above. Let $Y = T+HX+E$. We then solve $X$ and $T$ using $$
(\hat X, \hat T)=\argmin_{S, L}\gamma\|S\|_1+\|L\|_*, \quad\text{subject to }L+HS=Y.
$$
with $\gamma=1/\sqrt{n}$.

Figure \ref{fig:eda} shows the  relative recovery errors  averaged over 8 trials for each sparsity value. 

\begin{figure}[hbt]
\includegraphics[width=1\textwidth]{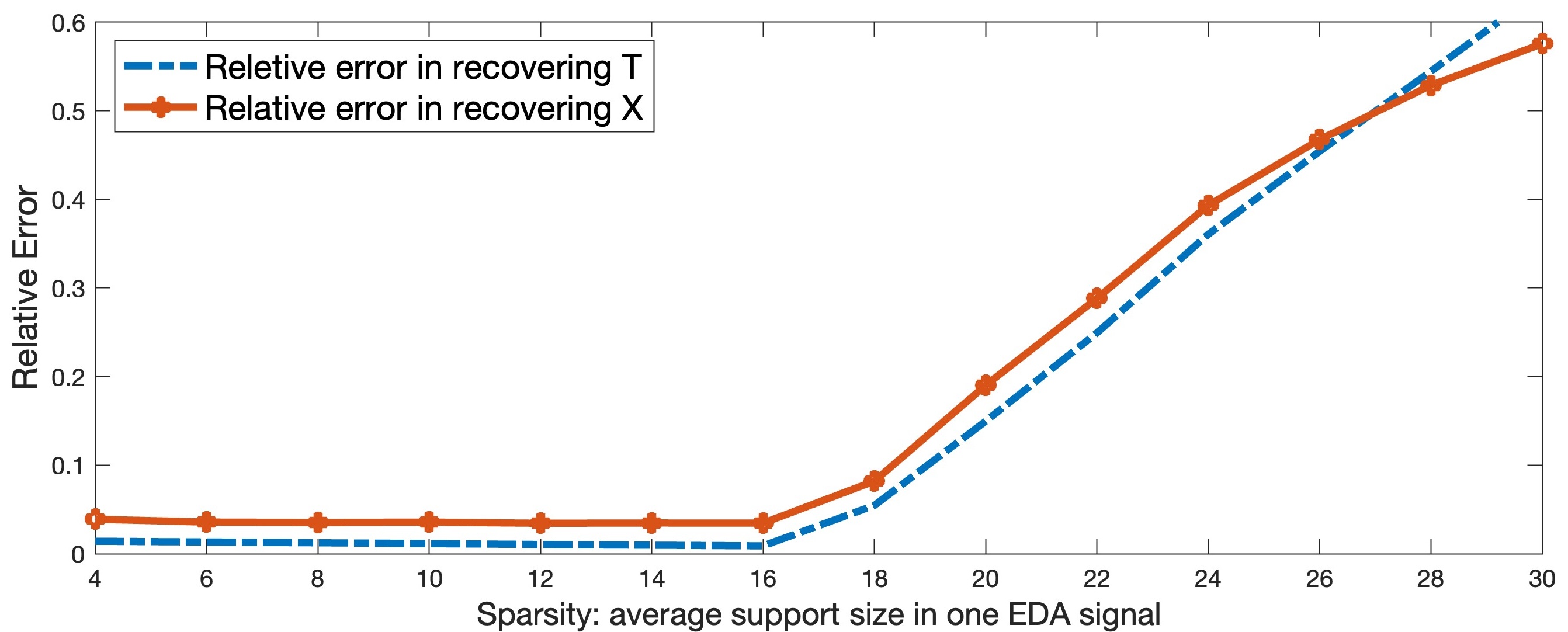}
\caption{Relative error in the EDA experiment as sparsity varies}\label{fig:eda}
\end{figure}

We observe from Figure \ref{fig:eda} that recovery of $X$ (and $T$) becomes poor when the number of SCR events or number of nonzero coordinates in one EDA signal ($\in\R^{160}$) is over 20.

In this EDA experiment, we see that program \eqref{equ:p} is robust with respect to the added noise $E$. This is not surprising at all and we will leave its theoretical guarantee as future work.

%

\section*{Acknowledgements}
X. Chen is partially funded by NSF DMS-2307827. X. Chen would like to thank UNC Wilmington's Research Hub initiative. R. Wang is funded by NSF CCF-2212065.
\section*{Appendix}
%

Let $X$ be a hypergeometric distribution denoted by $X\sim \text{HYP}(N, K, s)$, where $N$ is the population size, $K$ is the number of success states in the population, and $s$ is the number of draws. Let $p=K/N$. It is well known that $\E(X) = sp$. Moreover, $X$ approaches the Bernoulli distribution (with $s$ draws and success rate $p$) as the population grows, but is more concentrated around its mean than the Bernoulli distribution.  We have~\cite{GW17, S74}
\begin{equation}\label{equ:hyp}
\Pr(X\geq \E(X) + \sqrt{s}\lambda)\leq\exp(-\frac{2\lambda^2}{1-\frac{s-1}{N}}), \quad\text{ for any }\lambda >0.
\end{equation}

\begin{lemma}
Suppose $S\in\R^{m\times n}$ is drawn randomly from all collection of matrices with a fixed support size $s$. Let $r = \frac{s}{mn}$ be the proportion of the nonzero entries. Let $d_r(S), d_c(S)$ be the maximum number of nonzero entries per row, per column respectively, then
\begin{align}
\Pr(d_r(S)\geq\frac{s}{m}\log m)\leq m^{-\frac{nr}{m(2-r)}},\\
\Pr(d_c(S)\geq\frac{s}{n}\log n)\leq n^{-\frac{mr}{n(2-r)}}.
\end{align}
\end{lemma}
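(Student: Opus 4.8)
The plan is to reduce the statement to a one-sided deviation bound for a single hypergeometric random variable, apply the tail inequality \eqref{equ:hyp}, and conclude with a union bound over the rows (respectively columns). The two displayed inequalities are proved identically, so I would treat the row bound and obtain the column bound by applying it to $S^T\in\R^{n\times m}$, which swaps the roles of $m$ and $n$ while leaving $r=\frac{s}{mn}$ unchanged. First, fix a row index $i$ and let $X_i$ be the number of nonzero entries of $S$ in that row. Since $\supp(S)$ is a uniformly random $s$-subset of the $mn$ cells, $X_i$ counts how many of the $s$ chosen cells lie among the $n$ cells of row $i$; hence $X_i\sim\text{HYP}(N,K,s)$ with population size $N=mn$, number of success states $K=n$, and $s$ draws, so $p=K/N=1/m$ and $\E(X_i)=sp=\frac{s}{m}$.

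Next I would apply \eqref{equ:hyp} with the deviation chosen so that the threshold lands exactly at $\frac{s}{m}\log m$: taking $\lambda=\frac{\sqrt{s}}{m}(\log m-1)$ gives $\E(X_i)+\sqrt{s}\,\lambda=\frac{s}{m}\log m$, hence
$$\Pr\Big(X_i\ge\tfrac{s}{m}\log m\Big)\le\exp\!\Big(-\frac{2s(\log m-1)^2/m^2}{1-\frac{s-1}{mn}}\Big).$$
To turn the denominator into the advertised factor, note that $mn-s+1\le 2mn-s$, so
$$\frac{1}{1-\frac{s-1}{mn}}=\frac{mn}{mn-s+1}\ge\frac{mn}{2mn-s}=\frac{1}{2-r},$$
which upgrades the exponent to a bound of the form $-\,c\,\frac{s\log m}{m^2(2-r)}$ with a logarithmic gain $c$ coming from $(\log m-1)^2/\log m$.

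Finally, a union bound over the $m$ rows gives $\Pr\!\big(d_r(S)\ge\frac{s}{m}\log m\big)\le m\,\Pr\!\big(X_1\ge\frac{s}{m}\log m\big)$; writing $m=\exp(\log m)$, absorbing it into the exponent, and recalling $nr=\frac{s}{m}$ then yields the stated power $m^{-\frac{nr}{m(2-r)}}$, and the column inequality follows verbatim. I expect the only real work to be the bookkeeping in this last step: the tail supplies a $(\log m-1)^2$ in the exponent while the target keeps only $\log m$, and one must additionally swallow the extra $\log m$ produced by the union bound, so the clean exponent $\frac{nr}{m(2-r)}$ is really what survives after choosing $\lambda$ and lower-bounding the $(2-r)$ denominator as above (and, if needed, restricting to the regime where $m$ is not too small, so that $\log m-1>0$). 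Everything else is routine estimation.
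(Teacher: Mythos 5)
Your proposal is correct and follows essentially the same route as the paper: identify $X_i\sim\text{HYP}(mn,n,s)$ with mean $s/m$, apply \eqref{equ:hyp} with the deviation $\sqrt{s}\,\lambda=\frac{s}{m}(\log m-1)$, bound the denominator $1-\frac{s-1}{mn}$ by $2-r$, and absorb the union-bound factor $m$ into the exponent, which (as in the paper's ``crude estimate'') requires $m$ large with $r$ and $n/m$ essentially fixed so that $\frac{nr}{m}$ stays bounded away from zero. Your caveat about the asymptotic regime is exactly the same implicit assumption the paper makes, so there is no substantive difference.
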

\begin{proof}
Let $X_i$ be the number of nonzero entries in the $i$th row of $S$. By computing its probability mass function, it is easy to see that $X_i\sim\text{HYP}(mn, n, s)$. We have $\E(X_i) = s/m$. Note that $X_i$'s are not independent from each other. Let $\rho = 1-\frac{s-1}{mn}=1-r+\frac{1}{mn}$, then
\begin{align}\label{equ:r1}
\Pr(X_i\geq \frac{s}{m}\log m)=\Pr(X_i-\E(X_i)\geq\frac{s}{m}\log m - \frac{s}{m})\stackrel{\eqref{equ:hyp}}{\leq}
\exp\left(-\frac{2s}{m^2\rho}\left(\log m-1\right)^2\right).
\end{align}
\begin{align*}
&\Pr(d_r(S)\geq\frac{s}{m}\log m) = \Pr\left(\cup_{i=1}^m(X_i\geq\frac{s}{m}\log m)\right)\leq m\Pr(X_1\geq \frac{s}{m}\log m)\\
\stackrel{\eqref{equ:r1}}{\leq}&m\exp\left(-\frac{2s}{m^2\rho}\left(\log m-1\right)^2\right)\leq m^{-\frac{nr}{m\rho}}\leq m^{-\frac{nr}{m(2-r)}}.
\end{align*}
The ratio $r$ and the dimension ratio $n/m$ are considered somewhat fixed, so $\frac{s}{m^2\rho}=\frac{nr}{mp}$ is a bounded constant.
 The last inequality above comes from the crude estimate when $m$ is big enough
$$\log\left(m\exp\left(-\frac{2s}{m^2\rho}\left(\log m-1\right)^2\right)\right)=\log(m)-\frac{2s}{m^2\rho}\left(\log m-1\right)^2\leq-\frac{nr}{m\rho}\log m.$$

\end{proof}
%

\bibliographystyle{amsplain}
\bibliography{ref_matrix_sep}

%
%
%
%
\end{document}